\documentclass[a4paper,UKenglish,numberwithinsect]{lipics-v2021}

\usepackage{booktabs}
\usepackage{float}
\usepackage{tikz}
\usetikzlibrary{arrows.meta,positioning}
\usepackage{pgfplots}
\usepgfplotslibrary{groupplots}
\pgfplotsset{compat=1.18}

\definecolor{staircellblue}{RGB}{216,233,247}
\definecolor{stairlineblue}{RGB}{86,139,173}
\definecolor{stairblockgreen}{RGB}{220,244,224}
\definecolor{stairgreenline}{RGB}{38,145,71}
\definecolor{stairblockgray}{RGB}{238,240,242}
\definecolor{stairgrayline}{RGB}{112,120,128}
\definecolor{stairwarmline}{RGB}{230,126,34}
\definecolor{stairredline}{RGB}{201,55,48}
\definecolor{stairtextgray}{RGB}{82,91,99}
\definecolor{benchmarkproposed}{RGB}{215,48,39}
\definecolor{benchmarkbaseline}{RGB}{117,81,166}
\definecolor{benchmarkgrid}{RGB}{205,210,215}

\hideLIPIcs

\floatstyle{plain}
\newfloat{algorithm}{tbp}{loa}
\floatname{algorithm}{Algorithm}
\newcounter{algline}
\newenvironment{algorithmic}{%
  \setcounter{algline}{0}%
  \begin{tabular}{@{}r@{\hspace{0.7em}}p{\dimexpr\linewidth-3em\relax}@{}}%
}{\end{tabular}}
\newcommand{\AlgLine}[2]{%
  \stepcounter{algline}%
  \makebox[1.5em][r]{\scriptsize\thealgline} & \hspace*{#1em}#2\\%
}

\newcommand{\floor}[1]{\left\lfloor #1\right\rfloor}

\newcommand{\coeff}[2]{[z^{#1}]#2}
\newcommand{\M}{\mathsf M}
\newcommand{\MZ}{\mathsf M_{\mathbb Z}}
\DeclareMathOperator{\mult}{\mathcal W}

\title{Computing All Lattice-Rectangle Counts by Rational Staircase Sums}
\titlerunning{Computing All Lattice-Rectangle Counts}

\author{Dmitry Babichev}{Independent researcher, France}{}{}{}
\author{Denis Pinchuk}{Independent researcher, USA}{}{}{}
\authorrunning{Dmitry Babichev and Denis Pinchuk}
\Copyright{Dmitry Babichev and Denis Pinchuk}

\ccsdesc[500]{Theory of computation~Design and analysis of algorithms}
\ccsdesc[300]{Theory of computation~Computational geometry}
\keywords{Lattice rectangles, exact counting, generating functions, product trees, symbolic computation}

\EventEditors{}
\EventNoEds{1}
\EventLongTitle{}
\EventShortTitle{}
\EventAcronym{}
\EventYear{}
\EventDate{}
\EventLocation{}
\EventLogo{}
\SeriesVolume{}
\ArticleNo{}

\providecommand{\reviewlinemode}{0}
\ifnum\reviewlinemode=1
  \modulolinenumbers[5]
\fi

\begin{document}
\ifnum\reviewlinemode=1
  \linenumbers
\else
  \nolinenumbers
\fi
\hypersetup{pdfsubject={Preprint}}
\raggedbottom
\maketitle
\makeatletter
\let\@oddfoot\@empty
\let\@evenfoot\@empty
\makeatother

\begin{abstract}
Let $F(n)$ be the number of rectangles, not necessarily axis-parallel, whose
vertices belong to the $n\times n$ square grid of lattice points.  We compute
 the complete table $F(1),\ldots,F(N)$ exactly in
 $O(\M(N)\log N)$ coefficient-ring operations and $O(N\log N)$ ring elements
 of working memory, where $\M(N)$ is a regular bound for multiplying degree-$N$
 polynomials.  The ring-level statement assumes that $6$ is invertible; over
 $\mathbb Z$ the only division is instead performed exactly in the elementary
 boundary term.  With quasi-linear polynomial multiplication the arithmetic
 bound is $O(N\log^2 N)$.  The algorithm applies a square-root cover before
coefficient extraction and evaluates the resulting rational wedge and
triangular sums by a local-denominator divide-and-conquer recursion.  Primitive
directions are recovered coefficientwise by M\"obius inversion, followed by
five prefix sums.  A modular number-theoretic-transform (NTT) implementation
with certified Chinese-remainder (CRT) recovery is evaluated experimentally
against the $O(N^{3/2})$ all-values algorithm.
\end{abstract}

\section{Introduction}

Let $F(n)$ denote the number of rectangles, axis-parallel or oblique, whose
four vertices belong to the point grid $\{0,\ldots,n-1\}^2$.  The sequence
$(F(n))_{n\geqslant1}$ is OEIS A085582~\cite{oeis}.  Babichev and
Babichev~\cite{paper} give an
$O(N^{3/2})$-operation algorithm for constructing the complete table
$F(1),\ldots,F(N)$.  We give an exact algorithm with arithmetic complexity
 $O(\M(N)\log N)$.  Throughout, $\M(N)$ is a nondecreasing, superadditive,
 at-least-linear bound for multiplying degree-$N$ polynomials over the
 coefficient ring, and $\M(cN)=O(\M(N))$ for every fixed $c>0$.  These standard
 regularity assumptions make constant changes in truncation length harmless.

 The problem lies at the intersection of lattice-point enumeration, primitive
 direction counting, and fast formal-series computation.  Barvinok's
 fixed-dimensional method encodes lattice points in rational polyhedra by short
 rational generating functions~\cite{barvinok}; its algorithmic development
 includes Barvinok and Pommersheim~\cite{barvinok-pommersheim}, projection of
 lattice-point sets by Barvinok and Woods~\cite{barvinok-woods}, and conversion
 between rational generating functions and explicit piecewise counting
 functions by Verdoolaege and Woods~\cite{verdoolaege-woods}.  Explicit
 formulas for rational polygons based on Dedekind--Rademacher sums are given by
 Beck and Robins~\cite{beck-robins-rational-polygons}; see also their monograph
 \cite{beck-robins}.  Those results seek short encodings or evaluate a fixed
 polyhedral counting problem.  Here the output itself is the dense vector of
 $N$ consecutive coefficients, so our analysis instead charges the dense
 polynomial work needed to produce all of them.

 Primitive lattice points in planar domains have a substantial analytic and
 algorithmic literature.  Huxley and Nowak study asymptotics in convex planar
 domains~\cite{huxley-nowak}, while Pawlewicz and P\u{a}tra\c{s}cu give sublinear
 algorithms for Farey order statistics and primitive points in polygons
 \cite{pawlewicz-patrascu}.  Reusing the same weighted direction--side
contributions across every endpoint, while returning the complete prefix,
distinguishes our task from a single primitive-point query.  Rational
 staircases also occur in the theory of Dedekind--Carlitz polynomials and
 rational cones~\cite{beck-haase-matthews}; Breuer and von Heymann give a
 Euclidean-algorithm recursion for rational staircases in $\mathbb Z^2$
 \cite{breuer-heymann}.  Our construction instead uses an index region with product
 denominators, and its recursion is balanced and dyadic so that local
 polynomial spans, rather than Euclidean remainders, control the cost.

 On the algebraic side, the product- and remainder-tree tools used below go
 back to Borodin and Moenck~\cite{borodin-moenck} and are treated systematically
 in~\cite{modern-computer-algebra}; Newton inversion of formal series uses the
 classical fast-series machinery of Brent and Kung~\cite{brent-kung}.

For each fixed leading direction coordinate, geometric series eliminate the side variables and
leave two rational kernels over a wedge of direction indices.  Exchanging the
direction pair with the side pair gives a square-root cover: every contributing
tuple has a direction index or a side index at most $\sqrt N$.
Inclusion--exclusion reduces the
whole truncated series to a small wedge and one triangular overlap.  A balanced
recursion decomposes each resulting index region into dyadic rectangles and stores every
partial rational sum over the product denominator of its own index intervals.
This locality prevents intermediate polynomials from acquiring the full global
degree at every recursive node.  Coefficientwise M\"obius inversion then
restores primitive directions, and five ordinary prefix sums recover every
output value.

 The resulting ring-level theorem is valid over a commutative ring with identity
 in which $6$ is invertible.  Every formal-series denominator created by the
 recursion has constant term one.  Over $\mathbb Z$, the same
 construction remains integral and the boundary polynomial $F_0(n)$ in
 \eqref{av:eq:F0} is evaluated by an exact division by $6$ before it is
 added.  With $\M(N)=O(N\log N)$ multiplication the bound becomes
 $O(N\log^2 N)$.  The implementation evaluates the identities in two
 transform-friendly 61-bit fields and reconstructs the exact output only after
 the modular series computations.

Our contributions are the square-root cover at the series level, explicit
closed overlap formulas, the local-denominator recursion over the two index regions and its
span analysis, exact primitive recovery, and an NTT/CRT implementation compared
 with the $O(N^{3/2})$ all-values algorithm of~\cite{paper}.  The same prior
work also studies the single-query problem.  A companion paper develops
that direction further, giving an exact $O(n\log n)$-operation algorithm for
one prescribed value $F(n)$ by M\"obius divisor layers and batched Euclidean
floor-moment evaluation~\cite{companion-one-value}; the present paper instead
targets the complete prefix by rational generating series.

The next section gives the complete geometric and algebraic overview.  The
following sections derive the kernels, square-root cover, overlap formulas,
dyadic-region evaluator, complexity bound, primitive recovery, and complete
pseudocode.  Experiments follow the theorem, while detailed derivations,
executable formulas, exact arithmetic, and bit complexity are placed in the
appendices.

\section{Overview of the all-values method}

The all-values construction computes the whole prefix
$F(1),\ldots,F(N)$ simultaneously.  This section introduces the two coefficient
arrays and follows them through the square-root cover, local-denominator
recursion, primitive extraction, and final prefix sums.  The organizing idea is
to group contributions by their larger bounding-box dimension instead of
recomputing the same direction and side sums separately for every endpoint.

After the geometric symmetries are fixed, $a>b>0$ are the positive coordinates
of a primitive direction vector and $x\geqslant y>0$ are the integer
multipliers of the two perpendicular sides.

\begin{definition}[Threshold and bounding-box gap]\label{av:def:threshold-gap}
For such a representative, its
\emph{threshold} and \emph{bounding-box gap} are, respectively,
\[
 m:=ax+by,
 \qquad
 g:=(ax+by)-(ay+bx)=(a-b)(x-y).
\]
Thus $m$ is the larger bounding-box dimension and $g\geqslant0$ is the
difference between the two dimensions.
\end{definition}
The placement count in every grid larger than the threshold is a quadratic
polynomial in $n-m$.  We therefore build two coefficient arrays: one counts
tuples of threshold $m$, and the other records their bounding-box gaps.  Once
these arrays are known, five ordinary prefix sums recover every value $F(n)$.

The coefficient arrays are obtained as follows.  For one fixed direction
coordinate $a$, geometric series sum the two side variables explicitly; the
remaining direction coordinate becomes $\ell=a+b$ with
$a<\ell<2a$.  The layers therefore form one wedge-shaped index region.
We call the involution $(a,b)\leftrightarrow(x,y)$ \emph{direction--side
symmetry}.  It implies that every relevant tuple has either a
direction index or a side index at most $\sqrt N$.  Inclusion--exclusion then
reduces the full series to the small part of this wedge and one triangular
overlap correction, with only $O(\sqrt N)$ indices in each role.  A balanced
recursion covers these regions by dyadic rectangles.  A rectangle wholly
inside a region is evaluated at once from two interval product trees; only
rectangles meeting a boundary are split.  Each recursive state keeps the
denominator belonging to its own two intervals, which prevents the
intermediate polynomials from acquiring the full global degree too early.
Finally, coefficientwise M\"obius inversion removes nonprimitive directions,
and the five prefix sums produce the output table.

We now make this outline exact.  We use $z$ as the formal generating variable
and truncate every series modulo $z^{N+1}$.

\begin{definition}[Staircase]\label{av:def:staircase}
An \emph{affine staircase} is a finite set of integer pairs obtained by
intersecting a Cartesian product of integer intervals with one or two strict
affine half-planes.  In this paper, \emph{staircase} always means an affine
staircase whose defining slopes and number of boundary lines are independent
of~$N$.
\end{definition}

The following standard parametrization fixes the geometric multiplicities used
throughout the paper and matches the notation of~\cite{paper}.

\begin{lemma}[Geometric parametrization]\label{av:lem:parametrization}
Define
\begin{equation}
 F_0(n)=\binom n2^2+\sum_{s=2}^{n-1}(s-1)(n-s)^2
       =\frac{n(n-1)^2(2n-1)}6.
 \label{av:eq:F0}
\end{equation}
For $x\geqslant y\geqslant1$, put
\[
 \mult(x,y)=
 \begin{cases}
  2,&x=y,\\
  4,&x>y.
 \end{cases}
\]
Then the contribution of the strict direction representatives is
\begin{equation}
 F_1(n)=
 \sum_{\substack{a>b>0,\ x\geqslant y\geqslant1\\
                   \gcd(a,b)=1,\ ax+by\leqslant n}}
 \mult(x,y)(n-ax-by)(n-ay-bx),
 \label{av:eq:F1}
\end{equation}
and the complete count is
\begin{equation}
 F(n)=F_0(n)+F_1(n).
 \label{av:eq:split}
\end{equation}
\end{lemma}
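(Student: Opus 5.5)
Every rectangle with vertices in $\{0,\ldots,n-1\}^2$ is determined by one vertex $P$ and two perpendicular side vectors. Write the first side as $x(a,b)$ and the second as $y(-b,a)$, where $(a,b)$ is a primitive direction and $x,y\geqslant1$ are integers. Integrality of the second side forces it to be a multiple of the primitive perpendicular $(-b,a)$. The rectangle then has an axis-parallel bounding box of dimensions $|a|x+|b|y$ by $|b|x+|a|y$. A bounding box of width $W$ and height $H$ fits in $(n-W)(n-H)$ positions when $W,H\leqslant n-1$, and in none otherwise.

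The plan is to split into axis-parallel and oblique rectangles. This gives $F_0$ and $F_1$ respectively, so \eqref{av:eq:split} holds by construction.

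\textbf{Axis-parallel part.} An axis-parallel rectangle is determined by choosing two distinct $x$-coordinates and two distinct $y$-coordinates, which gives $\binom n2^2$. That is the first term of \eqref{av:eq:F0}.

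\textbf{Diagonal directions.} The remaining oblique cases with $a=b$ are the directions $(1,\pm1)$. A rectangle with sides $x(1,1)$ and $y(-1,1)$ has bounding box $(x+y)\times(x+y)$. Put $s=x+y$. The number of ordered pairs $(x,y)$ with $x+y=s$ is $s-1$, and each gives $(n-s)^2$ placements. Summing over $s\leqslant n-1$ gives the second term of \eqref{av:eq:F0}. I must check that taking both orders of $(x,y)$ counts each tilted square or rectangle exactly once, and does not double count under the swap to direction $(1,-1)$. The check is that the $(1,1)$ representative is unique once the side along $(1,1)$ is declared "first". The closed form then follows by routine summation (verify at $n=2,3$ and compare leading terms).

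\textbf{Strict directions.} For $a\neq b$, both nonzero, use the dihedral symmetry of the square together with the choice of which side is called "first". This normalizes to a unique representative with $a>b>0$ and $x\geqslant y$. Its bounding box is $(ax+by)\times(ay+bx)$, which gives the factor $(n-ax-by)(n-ay-bx)$. Since $ax+by\geqslant ay+bx$, the condition $ax+by\leqslant n$ is the right constraint. At $ax+by=n$ the factor vanishes, so including equality is harmless.

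The main obstacle is the orbit count $\mult(x,y)$. One rectangle is encoded by the unordered pair of its side directions up to sign. Among representatives with $a>b>0$, the relevant operations are the reflection $(a,b)\mapsto(b,a)$ composed with a side swap, the rotation by $90^\circ$, and the reflections. I would enumerate the primitive directions of the two sides of a fixed oblique rectangle. They are $\pm(a,b)$ and $\pm(-b,a)$ with side lengths $x,y$. The possible encodings are: first side along $(a,b)$ with multiplier $x$; first side along $(b,-a)$ reduced by sign; and so on. I would then show that the geometric rectangles of a given shape type $(a,b,x,y)$ with $x>y$ come in $4$ orientations: the tuple and its images under the reflection $(u,v)\mapsto(v,u)$ and the rotation, modulo the identifications. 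When $x=y$ (a square), rotation by $90^\circ$ maps the shape to itself, so only $2$ distinct orientations remain. Concretely, I would list the $8$ images under the dihedral group and quotient by the stabilizer, which has order $2$ generically and $4$ when $x=y$. I would also double-check that $a>b$ strictly excludes no case beyond $a=b$ (handled above) and $b=0$ (axis-parallel). Finally, $\gcd(a,b)=1$ enforces a unique primitive representation.
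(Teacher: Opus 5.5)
Your proposal is correct and follows essentially the same route as the paper. Both use the same three direction classes: $b=0$; the diagonal class $(1,1)$, counted by ordered pairs with $s=x+y$; and the strict class, normalized to $a>b>0$, $x\geqslant y$ with bounding box $(ax+by)\times(ay+bx)$. Your orbit--stabilizer count under the dihedral group only makes explicit the orbit sizes four and two that the paper asserts.

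One small caveat: agreement at $n=2,3$ plus the leading term does not determine a quartic. So obtain the closed form by actual power-sum evaluation, e.g.\ $\sum_{s=2}^{n-1}(s-1)(n-s)^2=\binom{n+1}4+\binom n4$, or by checking five values of $n$.
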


\begin{proof}
Every rectangle has perpendicular side vectors $x(a,b)$ and $y(-b,a)$, where
$x,y$ are positive integers and $(a,b)$ is primitive.  Reflections and
interchange of the sides give a unique representative with
$a\geqslant b\geqslant0$ and $x\geqslant y$.  Its bounding-box dimensions are
$ax+by$ and $ay+bx$, so it has
$(n-ax-by)(n-ay-bx)$ translations.  For $a>b>0$, the reflection orbit has size
two when $x=y$ and four when $x>y$, giving~\eqref{av:eq:F1}.

There are two boundary direction classes.  If $b=0$, primitivity forces
$(a,b)=(1,0)$ and gives the $\binom n2^2$ axis-parallel rectangles.  If $a=b$,
primitivity forces $(a,b)=(1,1)$.  Writing $s=x+y$, this class contributes
$\sum_{s=2}^{n-1}(s-1)(n-s)^2$.  This proves~\eqref{av:eq:F0}; the three
classes are disjoint and exhaustive, proving~\eqref{av:eq:split}.
\end{proof}

\begin{samepage}
\begin{theorem}[All-values theorem]\label{av:thm:main}
 Let $R$ be a commutative ring with identity in which $6$ is a unit, and let
 $\M$ satisfy the regularity assumptions stated in the introduction.  The
 images in $R$ of $F(1),\ldots,F(N)$ can be computed in
 $O(\M(N)\log N)$ ring operations and $O(N\log N)$ ring elements of working
 memory.  Over $\mathbb Z$, the same algebraic bounds hold when the cost is
 counted in integer-ring operations and stored integers: all staircase
 denominators have constant term one, and the sole division by $6$ in
 $F_0(n)$ is exact.  If $\M(N)=O(N\log N)$, the arithmetic bound becomes
 $O(N\log^2N)$.
\end{theorem}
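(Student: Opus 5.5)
The plan is to reduce $F_1$ to two coefficient arrays indexed by the threshold $m$, compute those arrays as truncated power series modulo $z^{N+1}$, and then read off every $F(n)$ by prefix sums. Write the summand in \eqref{av:eq:F1} as $\mult(x,y)(n-m)(n-m+g)$ with $m=ax+by$ and $g=(a-b)(x-y)$. Define
\[
 A_m=\sum \mult(x,y), \qquad B_m=\sum \mult(x,y)\,g,
\]
both over primitive tuples of threshold $m$. Then
\[
 F_1(n)=\sum_{m\le n}\bigl[A_m(n-m)^2+B_m(n-m)\bigr].
\]
Expanding in powers of $n$ gives combinations of prefix sums of $A_m$, $mA_m$, $m^2A_m$, $B_m$ and $mB_m$. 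These are the five prefix sums, and they cost $O(N)$. The term $F_0(n)$ is evaluated directly in $O(N)$; over $\mathbb Z$ the division by $6$ is exact because $F_0(n)$ is an integer count. The remaining work is to compute the generating series $\sum_m A_m z^m$ and $\sum_m B_m z^m$.

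Next I remove the primitivity condition. Let $\tilde A,\tilde B$ be the same arrays without $\gcd(a,b)=1$. Scaling $(a,b)$ by $d$ multiplies both $m$ and $g$ by $d$, so $\tilde A_m=\sum_{d\mid m}A_{m/d}$ and $\tilde B_m=\sum_{d\mid m} d\,B_{m/d}$. Both relations are inverted coefficientwise by M\"obius inversion, using harmonic-sum work $O(N\log N)\subseteq O(\M(N)\log N)$.

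For the non-primitive series, fix $a$ and set $\ell=a+b$ with $a<\ell<2a$. Summing the geometric series in $x\ge y\ge 1$ gives rational functions in $z^a$ and $z^{\ell}$ whose denominators are products such as $(1-z^{\ell})(1-z^{a+\ell})$, or of that shape. The $g$-weighted series is obtained by a derivative-type operator, giving a second kernel. This yields sums of rational kernels over the wedge $\{(a,\ell): a<\ell<2a\}$, and every denominator has constant term $1$.

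The direction--side involution $(a,b)\leftrightarrow(x,y)$ preserves $m$ and $g$ up to the multiplicity bookkeeping. Any tuple with $m\le N$ has $\min(b,y)\le\sqrt N$, roughly. By inclusion--exclusion, the full truncated series equals twice the part of the wedge with small direction index, minus a triangular overlap where both indices are $O(\sqrt N)$; the overlap is given by closed formulas. Each such region is a staircase in the sense of Definition~\ref{av:def:staircase}.

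The main obstacle is evaluating $\sum_{(i,j)\in S} P_{ij}(z)/Q_{ij}(z)$ over a staircase $S$ in $O(\M(N)\log N)$. I would use a balanced dyadic recursion on the bounding box. A rectangle lying inside $S$ factors, because its kernel splits into products over the $i$- and $j$-intervals. It is therefore evaluated from two interval product trees, returning a numerator over the local denominator $\prod_{i\in I}\prod_{j\in J}$ of factors of degree $O(\max I+\max J)$. Rectangles meeting the boundary are split into four children, and the children are combined by cross-multiplication of their local fractions. Only $O(2^k)$ rectangles at depth $k$ meet the $O(1)$ boundary lines.

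The key estimate is a span bound. I would bound the degree of a state at depth $k$ by the product of the interval lengths times the index size, and aim for total degree $O(N)$ per level. Beyond degree $N$ the state is truncated to the series $PQ^{-1}\bmod z^{N+1}$, which is legitimate because $Q(0)=1$. With superadditivity of $\M$, each level then costs $O(\M(N))$. With $O(\log N)$ levels this gives $O(\M(N)\log N)$, and storing one level of product trees and states gives $O(N\log N)$ memory. The final Newton inversion of the root denominator costs $O(\M(N))$. Substituting $\M(N)=O(N\log N)$ gives $O(N\log^2N)$. I expect the most delicate part to be checking that the per-level degree sum stays $O(N)$ near the wedge apex.
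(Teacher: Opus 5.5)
Your overall architecture matches the paper's: threshold and gap arrays with five prefix sums, M\"obius inversion with the extra factor $d$, per-layer geometric series, a direction--side cover, a dyadic staircase recursion, and one root Newton inversion. The complexity core, however, has a genuine gap.

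You take the local denominator of a state to be $\prod_{i\in I}\prod_{j\in J}$ of factors of degree $O(\max I+\max J)$, and you bound its degree by $|I|\,|J|\cdot K$. At depth $k$ there are $O(2^k)$ boundary states of side $K/2^k$, so this gives total degree $K^3/2^k$ per level, which is $N^{3/2}$ near the root. Truncating at degree $N$ only improves this to $\min(2^kN,K^3/2^k)$, which peaks at $\Theta(N^{5/4})$. Replacing a state by $PQ^{-1}\bmod z^{N+1}$ costs a Newton inversion per state. So ``$O(N)$ per level'' does not follow, and the difficulty is not localized at the wedge apex.

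The missing idea is that the per-pair kernel is \emph{separable}. Writing $x=y+t$ gives $z^{at+\ell y}$, so the denominator is $(1-z^a)(1-z^\ell)$, with the first factor squared for the gap kernel. It is not $(1-z^\ell)(1-z^{a+\ell})$; a factor $1-z^{a+\ell}$ would contradict your own claim that full rectangles factor. Hence every state $(I,J)$ can be kept over $D_IE_J=\prod_{i\in I}f_i^r\prod_{j\in J}f_j^s$, whose degree $O(K(|I|+|J|))$ is \emph{additive}. Children are lifted by the complementary sibling factor only:
$\operatorname{Cross}(I,J)=\operatorname{Cross}(I_0,J)D_{I_1}+\operatorname{Cross}(I_1,J)D_{I_0}$.
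Naive cross-multiplication of the children's fractions would repeat the shared factor $E_J$ and make degrees grow with depth. With the additive bound, $O(K/h)$ states of span $O(Kh)$ at scale $h$ give $O(K^2)=O(N)$ per scale. Products are simply truncated modulo $z^{N+1}$, and only the root denominator is inverted.

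The cover is also stated with the wrong variables. The bound $\min(b,y)\leqslant\sqrt N$ is true but useless. On $\{b\leqslant K\}$ the index $a$ is unbounded, so the ``small'' part is a band with $\Theta(N^{3/2})$ pairs and indices up to $\Theta(N)$, not an $O(\sqrt N)\times O(\sqrt N)$ staircase. You need $ax\leqslant m\leqslant N$, hence $\min(a,x)\leqslant K$. Then $\ell<2a\leqslant 2K$. The overlap $\{a\leqslant K,\ x\leqslant K\}$ becomes, in layer coordinates, the finite triangle $t+y\leqslant K$. Finite geometric sums turn it into atoms indexed by $a$, $a+b$ and $b$, over the wedge and the triangle $1\leqslant b<a$.

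Finally, the involution $(a,b)\leftrightarrow(x,y)$ is a bijection only on the strict set $\{a>b>0,\ x>y>0\}$. A diagonal tuple $(a,b,x,x)$ maps to the direction $(x,x)$, which lies outside $a>b$. So the side diagonal $x=y$ cannot be carried through the inclusion--exclusion with weight $2$ as ``multiplicity bookkeeping.'' It must be split off first, as the paper does with the closed form $d_m^*=\floor{(m-1)/2}$.
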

\end{samepage}

\begin{remark}[Modular realization]
The theorem is independent of a particular multiplication implementation.
The modular NTT realization uses the transform-friendly field hypothesis in
Appendix~\ref{av:sec:beyond} and reconstructs only the final coefficients.
\end{remark}

The next sections prove the stages above in that order.  The longer algebraic
derivations and implementation refinements are collected in
Appendices~\ref{app:all-values-details} and~\ref{app:all-values-arithmetic}.

\section{One direction layer and the two generating kernels}

This section eliminates the two side variables for one leading direction
coordinate and identifies the two rational series used by the rest of the
algorithm.  The first records how many tuples attain each threshold; the second
records the gap between their two bounding-box dimensions.

\begin{definition}[Direction layer and primitive-free strict kernels]
\label{av:def:direction-layer}
For a fixed $a\geqslant2$, the sum over directions $1\leqslant b<a$ and strict
side representatives $x>y$ is the \emph{direction layer} indexed by~$a$.  Put
\begin{equation}
 f_j=1-z^j,\qquad
 \mathsf{A}_j=\frac{z^j}{f_j},\qquad
 \mathsf{B}_j=\frac{z^j}{f_j^2}.
 \label{av:eq:atoms}
\end{equation}
The \emph{primitive-free strict kernels} are
\begin{align}
 H_0(z)&=\sum_{a>b\geqslant1}\mathsf{A}_a\mathsf{A}_{a+b},
 \label{av:eq:H0}\\
 C_0(z)&=\sum_{a>b\geqslant1}(a-b)\mathsf{B}_a\mathsf{A}_{a+b}.
 \label{av:eq:C0}
\end{align}
Here \emph{primitive-free} means that $\gcd(a,b)=1$ has not yet been imposed;
\emph{strict} refers to the representatives $a>b$ and $x>y$.
\end{definition}

\begin{lemma}[Layer generating functions]\label{av:lem:layer-series}
For a fixed layer $a$, put $\ell=a+b$ and $t=x-y$.  Its threshold and gap are
\[
 m=at+\ell y,\qquad (a-b)(x-y)=(2a-\ell)t,
\]
and its two generating series are
\begin{align}
 \mathcal H_a(z)
   &=\mathsf A_a\sum_{a<\ell<2a}\mathsf A_\ell,
 \label{av:eq:Hlayer}\\
 \mathcal C_a(z)
   &=\mathsf B_a\sum_{a<\ell<2a}(2a-\ell)\mathsf A_\ell.
 \label{av:eq:Clayer}
\end{align}
Moreover,
\begin{align}
 \coeff{m}{H_0}
 &=\#\{a>b>0,\ x>y>0: ax+by=m\},
 \label{av:eq:Hcone}\\
 \coeff{m}{C_0}
 &=\sum_{\substack{a>b>0,\ x>y>0\\ax+by=m}}
   (a-b)(x-y).
 \label{av:eq:Ccone}
\end{align}
\end{lemma}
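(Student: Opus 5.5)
The plan is a direct change of variables followed by two geometric-series summations, one per side variable. Fix $a\geqslant2$ and a direction $1\leqslant b<a$. Put $\ell=a+b$; then $b\mapsto\ell$ is a bijection from $\{1,\ldots,a-1\}$ onto $\{\ell: a<\ell<2a\}$, and $a-b=2a-\ell$. For strict side representatives $x>y\geqslant1$, put $t=x-y\geqslant1$. Then $(x,y)\mapsto(t,y)$ is a bijection onto $\{t\geqslant1\}\times\{y\geqslant1\}$. Substituting $x=y+t$ gives
\[
ax+by=a(y+t)+by=at+(a+b)y=at+\ell y,
\]
and $(a-b)(x-y)=(2a-\ell)t$. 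These are the two displayed identities.

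For the threshold series I will write the layer's contribution as $\sum_{\ell}\sum_{t\geqslant1}\sum_{y\geqslant1}z^{at+\ell y}$. This factors as $\bigl(\sum_{t\geqslant1}z^{at}\bigr)\bigl(\sum_{y\geqslant1}z^{\ell y}\bigr)=\mathsf A_a\mathsf A_\ell$, which gives \eqref{av:eq:Hlayer}.

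For the gap series the weighted sum is $\sum_{\ell}(2a-\ell)\sum_{t,y}t\,z^{at+\ell y}$. Here I use $\sum_{t\geqslant1}t\,w^{t}=w/(1-w)^2$ with $w=z^a$, which gives the factor $\mathsf B_a$ and hence \eqref{av:eq:Clayer}. All of these are identities in $\mathbb Z[[z]]$, since each exponent is positive and only finitely many terms contribute to any given coefficient.

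To get \eqref{av:eq:Hcone} and \eqref{av:eq:Ccone}, I will sum the layer identities over $a\geqslant2$. Reindexing $\ell=a+b$ shows that $\sum_a\mathcal H_a=H_0$ and $\sum_a\mathcal C_a=C_0$, directly from \eqref{av:eq:H0}--\eqref{av:eq:C0}. The layer $a=1$ is empty because it has no $b$ with $1\leqslant b<1$.

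It then remains to read off coefficients. Expanding $\mathsf A_a\mathsf A_\ell$ as $\sum_{t,y}z^{at+\ell y}$ and undoing the bijections shows that $[z^m]H_0$ counts the tuples $a>b>0$, $x>y>0$ with $ax+by=m$. The same expansion, carrying the weights $(2a-\ell)t=(a-b)(x-y)$, gives the formula for $[z^m]C_0$.

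The only point that needs care, and it is mild, is justifying the interchange of the infinite sum over $a$ with coefficient extraction. The lowest-degree term of $\mathsf A_a\mathsf A_\ell$ is $z^{a+\ell}$, with $a+\ell>2a$, so only the layers with $a<m/2$ affect $[z^m]$. The family is therefore summable in the $z$-adic topology, and the termwise coefficient computation is legitimate.
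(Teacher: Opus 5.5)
Your proposal is correct and follows the paper's proof: the bijection $(b,x,y)\mapsto(\ell,t,y)$, the expansions $\mathsf A_a=\sum_{t\geqslant1}z^{at}$ and $\mathsf B_a=\sum_{t\geqslant1}tz^{at}$, summation of the layers over $a\geqslant2$, and coefficient extraction. Your extra remark on $z$-adic summability, namely that only layers with $a<m/2$ reach $z^m$, makes explicit a point the paper leaves implicit.
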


\begin{proof}
The substitution above is a bijection between $1\leqslant b<a$, $x>y>0$ and
$a<\ell<2a$, $t,y>0$.  Since
$\mathsf A_a=\sum_{t\geqslant1}z^{at}$ and
$\mathsf B_a=\sum_{t\geqslant1}t z^{at}$, summing first over $t$ and $y$ gives
the two layer formulas.  Summing the layers over $a\geqslant2$ gives
$H_0$ and $C_0$, and coefficient extraction gives the displayed
interpretations.
\end{proof}

All generating-series identities below are taken modulo $z^{N+1}$.

\section{A square-root cover of the full series}

This section applies direction--side symmetry before coefficient extraction,
ensuring that every retained rational term has a small direction index or a
small side index.  Its purpose is to replace the unbounded direction wedge by
two copies with only $K=\floor{\sqrt N}$ active indices, correcting their
intersection explicitly.

Set
\[
 K=\floor{\sqrt N}.
\]
If $ax+by\leqslant N$, then $ax\leqslant N$; hence $a>K$ and $x>K$ cannot both hold.
Moreover, the involution
\[
 (a,b)\longleftrightarrow(x,y)
\]
preserves both the exponent and the weight in~\eqref{av:eq:Ccone}.

\begin{definition}[Small and overlap kernels]\label{av:def:small-overlap}
The \emph{small kernels} $H_{\mathrm{s}}(z)$ and $C_{\mathrm{s}}(z)$ are the
restrictions of $H_0(z)$ and $C_0(z)$, respectively, to tuples with
$a\leqslant K$.  The \emph{overlap kernels} $H_\cap(z)$ and $C_\cap(z)$ are
their further restrictions to tuples satisfying both $a\leqslant K$ and
$x\leqslant K$.
\end{definition}

\begin{lemma}[Square-root cover]\label{av:lem:cover}
Modulo $z^{N+1}$,
\begin{equation}
 H_0=2H_{\mathrm{s}}-H_{\cap},\qquad
 C_0=2C_{\mathrm{s}}-C_{\cap}.
 \label{av:eq:cover}
\end{equation}
\end{lemma}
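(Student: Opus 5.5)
We work coefficientwise, using the combinatorial interpretations \eqref{av:eq:Hcone} and \eqref{av:eq:Ccone} from Lemma~\ref{av:lem:layer-series}. Fix $m\leqslant N$. Let $T_m$ be the finite set of tuples $(a,b,x,y)$ with $a>b>0$, $x>y>0$ and $ax+by=m$, and give each tuple the weight $1$ for $H$ and $(a-b)(x-y)$ for $C$. Then $\coeff m{H_0}$ and $\coeff m{C_0}$ are the total weights of $T_m$. By Definition~\ref{av:def:small-overlap}, the corresponding coefficients of $H_{\mathrm s},C_{\mathrm s}$ are the total weights of the subset $S_m=\{a\leqslant K\}$. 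The coefficients of $H_\cap,C_\cap$ are the total weights of $S_m\cap S'_m$, where $S'_m=\{x\leqslant K\}$. Since truncation modulo $z^{N+1}$ only retains $m\leqslant N$, the claim \eqref{av:eq:cover} reduces to the identity
\[
 w(T_m)=2w(S_m)-w(S_m\cap S'_m)\qquad(m\leqslant N)
\]
for both weights $w$.

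First, we show that $S_m\cup S'_m=T_m$. A tuple of $T_m$ has $ax\leqslant ax+by=m\leqslant N$. If $a>K$ and $x>K$, then $a,x\geqslant K+1>\sqrt N$, so $ax>N$, a contradiction. Hence every tuple lies in $S_m$ or $S'_m$.

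Second, we show that $w(S'_m)=w(S_m)$. The involution $\sigma:(a,b,x,y)\mapsto(x,y,a,b)$ preserves both defining constraints $a>b>0$ and $x>y>0$ of $T_m$, since it just swaps the two pairs. It preserves $m=ax+by$ by commutativity, and it preserves the weight $(a-b)(x-y)$. Moreover, $\sigma$ maps $S_m$ bijectively onto $S'_m$, because it exchanges the conditions $a\leqslant K$ and $x\leqslant K$. Therefore $w(S'_m)=w(S_m)$.

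Finally, inclusion--exclusion on the union from the first step gives
\[
 w(T_m)=w(S_m)+w(S'_m)-w(S_m\cap S'_m)=2w(S_m)-w(S_m\cap S'_m).
\]
This is exactly \eqref{av:eq:cover} coefficientwise for $m\leqslant N$.

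The only delicate point is making sure the kernel restrictions in Definition~\ref{av:def:small-overlap} are interpreted at the tuple level, in the original variables $(a,b,x,y)$, rather than at the level of the $\ell,t$ parametrization. The symmetry argument needs $\sigma$ to act on tuples, and \eqref{av:eq:Hcone} and \eqref{av:eq:Ccone} already give the coefficients as sums over such tuples. With that interpretation fixed, the remaining steps are the finite counting argument above, and we expect no real obstacle.
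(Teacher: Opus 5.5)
Your proof is correct and follows the paper's own argument: the cover $\{a\leqslant K\}\cup\{x\leqslant K\}$ forced by $ax\leqslant N$, equality of the two pieces via the involution $(a,b)\leftrightarrow(x,y)$ (which preserves both the exponent and the weight $(a-b)(x-y)$), and inclusion--exclusion. The only difference is that you write out coefficientwise, at the tuple level, the steps the paper compresses into three lines.
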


\begin{proof}
The admissible tuples are covered by $\{a\leqslant K\}\cup\{x\leqslant K\}$.
The two sets have equal generating series by the involution above, and their
intersection must be subtracted once.
\end{proof}

Writing $\ell=a+b$, the small parts are
\begin{align}
 H_{\mathrm{s}}
 &=\sum_{2\leqslant a\leqslant K}\mathcal H_a
  =\sum_{2\leqslant a\leqslant K}\ \sum_{a<\ell<2a}
   \mathsf{A}_a\mathsf{A}_\ell,
 \label{av:eq:Hsmall}\\
 C_{\mathrm{s}}
 &=\sum_{2\leqslant a\leqslant K}\mathcal C_a
  =\sum_{2\leqslant a\leqslant K}\ \sum_{a<\ell<2a}
   (2a-\ell)\mathsf{B}_a\mathsf{A}_\ell.
 \label{av:eq:Csmall}
\end{align}
Only factors $f_j$ with $j<2K$ occur.

\section{Closed formulas for the overlap}

The square-root cover counts its intersection twice.  This section evaluates
that overlap explicitly and reduces both kernels to a constant-size family of
separable rational atoms over two staircase-shaped index regions.

Define the finite series
\begin{align}
 W_j&=\sum_{y=1}^{K-1}z^{jy}
     =\frac{z^j-z^{Kj}}{f_j},
 \label{av:eq:W}\\
 R_j&=\sum_{y=1}^{K-1}(K-y)z^{jy}
     =\frac{(K-1)z^j-Kz^{2j}+z^{(K+1)j}}{f_j^2},
 \label{av:eq:R}
\end{align}
and
\begin{equation}
 Q^{(1)}_a=\frac{z^{(K+1)a}}{f_a},\qquad
 Q^{(2)}_a=\frac{z^{(K+1)a}}{f_a^2}.
 \label{av:eq:Q}
\end{equation}

For fixed $a>b$ and $\ell=a+b$, the overlap condition is
$t,y\geqslant1$ and $t+y\leqslant K$.  Summing this finite triangle gives
\begin{align}
 H_\cap
 &=\sum_{2\leqslant a\leqslant K}\sum_{1\leqslant b<a}
   \left(\mathsf{A}_aW_{a+b}-Q^{(1)}_aW_b\right),
 \label{av:eq:Hcap}\\
 C_\cap
 &=\sum_{2\leqslant a\leqslant K}\sum_{1\leqslant b<a}(a-b)
   \left(\mathsf{B}_aW_{a+b}-Q^{(1)}_aR_b-Q^{(2)}_aW_b\right).
 \label{av:eq:Ccap}
\end{align}
The two finite-sum identities behind these formulas are derived in
Appendix~\ref{app:overlap-derivation}.
The terms indexed by $a+b$ are evaluated over $\mathcal R_{\wedge}$ and retain
the inequalities $a<\ell<2a$ from the layer formula.  The correction terms
indexed by $b$ are evaluated over $\mathcal R_{\triangle}$ with
$1\leqslant b<a$.

Every factor in~\eqref{av:eq:Hsmall}, \eqref{av:eq:Csmall},
\eqref{av:eq:Hcap}, and~\eqref{av:eq:Ccap} is a linear combination of a constant
number of atoms
\begin{equation}
 \frac{\chi(j)z^{\alpha j}}{(1-z^j)^r},
 \qquad r\in\{1,2\},\quad \alpha\in\{1,\ldots,K+1\},
 \label{av:eq:genericatom}
\end{equation}
where $\chi(j)$ is a polynomial of degree at most one.

\section{Rectangle decomposition and local denominators}

The layer formula and the overlap correction leave exactly two index
regions.  The computational core splits each region into dyadic rectangles.
A balanced recursion follows the boundary of each staircase while storing
every partial rational sum over the denominator of its own local intervals.
The denominator construction is a specialized product tree; classical
product and remainder trees and their complexity are treated
in~\cite{borodin-moenck,modern-computer-algebra}.

We now evaluate sums of products of~\eqref{av:eq:genericatom} over either of the
two staircase regions
\begin{align}
 \mathcal R_{\wedge}&=\{(i,j):2\leqslant i\leqslant K,\ i<j<2i\},
 \label{av:eq:wedgeregion}\\
 \mathcal R_{\triangle}&=\{(i,j):2\leqslant i\leqslant K,\ 1\leqslant j<i\}.
 \label{av:eq:triangleregion}
\end{align}
In $\mathcal R_{\wedge}$ the indices are $i=a$ and $j=a+b$; in
$\mathcal R_{\triangle}$ they are $i=a$ and $j=b$.  Thus both regions encode
the same direction sums written in the coordinates used by their rational
atoms.

\begin{definition}[Staircase-sum operators]\label{av:def:staircase-operators}
For two families
\[
U_i=\frac{\chi(i)z^{\alpha i}}{f_i^r},\qquad
V_j=\frac{\psi(j)z^{\beta j}}{f_j^s},
\]
where $\chi$ and $\psi$ are polynomials of degree at most one and
$r,s,\alpha,\beta$ are integers satisfying
$1\leqslant r,s\leqslant2$ and $1\leqslant\alpha,\beta\leqslant K+1$, define
\begin{align}
 \operatorname{Wedge}(U,V)
   &:=\sum_{(i,j)\in\mathcal R_{\wedge}}U_iV_j
     =\sum_{i=2}^{K}\ \sum_{i<j<2i}U_iV_j,
 \label{av:eq:wedgeoperator}\\
\operatorname{Triangle}(U,V)
   &:=\sum_{(i,j)\in\mathcal R_{\triangle}}U_iV_j
     =\sum_{i=2}^{K}\ \sum_{1\leqslant j<i}U_iV_j.
 \label{av:eq:triangleoperator}
\end{align}
\end{definition}
Both sums are evaluated by the same interval-pair routine described below.
The triangular call has the single boundary $j=i$, while the wedge call tests
both $j=i$ and $j=2i$.
At the root the interval pairs are $I=[2,K]$, $J=[3,2K-1]$ for the wedge and
$I=[2,K]$, $J=[1,K-1]$ for the triangle; empty ranges are omitted.
The rectangle decomposition is useful because every complete block separates:
for intervals $I,J$ with $I\times J$ inside the region,
\[
 \sum_{i\in I}\sum_{j\in J}U_iV_j
 =\left(\sum_{i\in I}U_i\right)
  \left(\sum_{j\in J}V_j\right).
\]
The interval trees below represent these two one-dimensional sums over their
local common denominators.

\begin{definition}[Marked interval data]
\label{av:def:marked-data}
For an interval $I$ in the first index role, define
\begin{equation}
 D_I=\prod_{i\in I}f_i^r,
 \qquad
 P_I=\sum_{i\in I}\chi(i)z^{\alpha i}\frac{D_I}{f_i^r}.
 \label{av:eq:marked}
\end{equation}
Here $D_I$ is the \emph{local denominator} and $P_I$ is its
\emph{marked numerator}; thus $P_I/D_I=\sum_{i\in I}U_i$.
For an interval $J$ in the second role, define analogously
\[
 E_J=\prod_{j\in J}f_j^s,
 \qquad
 T_J=\sum_{j\in J}\psi(j)z^{\beta j}\frac{E_J}{f_j^s}.
\]
Thus $E_J$ and $T_J$ are the local denominator and marked numerator for the
second role, and $T_J/E_J=\sum_{j\in J}V_j$.
\end{definition}

\begin{definition}[Stored dense span]\label{av:def:dense-span}
Every stored polynomial carries the smallest monomial shift known from its
construction.  Its \emph{stored dense span} is the length of the remaining
coefficient interval; the zero polynomial has span zero.  Leading zero
coefficients created by a later cancellation need not be removed.
\end{definition}

The marked data are constructed in balanced interval trees.  If
$I=I_0\sqcup I_1$, then
\begin{equation}
 D_I=D_{I_0}D_{I_1},\qquad
 P_I=P_{I_0}D_{I_1}+D_{I_0}P_{I_1}.
 \label{av:eq:markedmerge}
\end{equation}
The analogous merge constructs $E_J,T_J$.  All products are truncated modulo
$z^{N+1}$.

Figure~\ref{av:fig:staircases} illustrates the geometric test made at every
recursive state.  A product of dyadic index intervals is accepted, discarded,
or split according to its position relative to the straight boundary lines.
The four block polynomials $D_I,E_J,P_I,T_J$ used by this test are the marked
interval data of Definition~\ref{av:def:marked-data}.

\begin{figure}[H]
  \centering
  \begin{tikzpicture}[
    x=1cm,y=1cm,
    every node/.style={font=\normalsize,align=center},
    stairaxis/.style={-{Latex[length=2.2mm]},thick}
  ]
    \node at (2.35,5.62) {(a) triangle region $\mathcal R_\triangle: j<i$};
    \node at (9.50,5.62) {(b) wedge region $\mathcal R_\wedge: i<j<2i$};

    \begin{scope}[shift={(0.70,1.30)}]
      \foreach \i in {0,...,15}{
        \foreach \j in {0,...,15}{
          \ifnum\j<\i
            \fill[staircellblue] ({0.21*\i},{0.21*\j}) rectangle
              ({0.21*(\i+1)},{0.21*(\j+1)});
          \fi
        }
      }
      \fill[stairblockgreen] (1.68,0) rectangle (3.36,1.68);
      \fill[stairblockgray] (0,1.68) rectangle (1.68,3.36);
      \draw[stairlineblue!55,xstep=0.21cm,ystep=0.21cm]
        (0,0) grid (3.36,3.36);
      \draw[stairgreenline,thick] (1.68,0) rectangle (3.36,1.68);
      \node at (2.52,0.84) {full block\\$P_I T_J$};
      \draw[stairgrayline,thick] (0,1.68) rectangle (1.68,3.36);
      \node[text=stairtextgray] at (0.84,2.52) {discard};
      \draw[stairwarmline,dashed,thick] (0,0) rectangle (1.68,1.68);
      \draw[stairwarmline,dashed,thick] (1.68,1.68) rectangle (3.36,3.36);
      \draw[stairwarmline,densely dotted] (0.84,0)--(0.84,1.68);
      \draw[stairwarmline,densely dotted] (0,0.84)--(1.68,0.84);
      \draw[stairwarmline,densely dotted] (2.52,1.68)--(2.52,3.36);
      \draw[stairwarmline,densely dotted] (1.68,2.52)--(3.36,2.52);
      \draw[thick] (0,0) rectangle (3.36,3.36);
      \draw[stairredline,thick] (0,0)--(3.36,3.36);
      \draw[stairaxis] (0,-0.27)--(3.70,-0.27) node[right] {$i$};
      \draw[stairaxis] (-0.27,0)--(-0.27,3.70) node[above] {$j$};
    \end{scope}

    \begin{scope}[shift={(6.05,1.30)}]
      \foreach \i in {0,...,15}{
        \foreach \j in {0,...,31}{
          \pgfmathtruncatemacro{\inside}{(\j>\i && \j<2*\i+1) ? 1 : 0}
          \ifnum\inside=1
            \fill[staircellblue] ({0.21*\j},{0.21*\i}) rectangle
              ({0.21*(\j+1)},{0.21*(\i+1)});
          \fi
        }
      }
      \fill[stairblockgreen] (2.52,1.68) rectangle (3.36,2.52);
      \fill[stairblockgray] (3.36,0) rectangle (6.72,1.68);
      \draw[stairlineblue!55,xstep=0.21cm,ystep=0.21cm]
        (0,0) grid (6.72,3.36);
      \draw[stairgreenline,thick] (2.52,1.68) rectangle (3.36,2.52);
      \node at (2.94,2.10) {full};
      \draw[stairgrayline,thick] (3.36,0) rectangle (6.72,1.68);
      \node[text=stairtextgray] at (5.04,0.84) {discard};
      \draw[stairwarmline,dashed,thick] (0,0) rectangle (6.72,3.36);
      \draw[stairwarmline,dashed,thick] (0,0) rectangle (3.36,3.36);
      \draw[stairwarmline,dashed,thick] (3.36,1.68) rectangle (6.72,3.36);
      \draw[stairwarmline,densely dotted] (1.68,0)--(1.68,3.36);
      \draw[stairwarmline,densely dotted] (0,1.68)--(3.36,1.68);
      \draw[stairwarmline,densely dotted] (5.04,1.68)--(5.04,3.36);
      \draw[thick] (0,0) rectangle (6.72,3.36);
      \draw[stairredline,thick] (0,0)--(3.36,3.36);
      \draw[stairredline,thick] (0,0)--(6.72,3.36);
      \draw[stairaxis] (0,-0.27)--(7.04,-0.27) node[right] {$j$};
      \draw[stairaxis] (-0.27,0)--(-0.27,3.70) node[above] {$i$};
    \end{scope}

    \node[anchor=west] at (0.55,0.38)
      {\tikz\fill[staircellblue,draw=stairlineblue]
       (0,0) rectangle (0.28,0.20);\ admissible pair};
    \node[anchor=west] at (3.65,0.38)
      {\tikz\fill[stairblockgreen,draw=stairgreenline]
       (0,0) rectangle (0.28,0.20);\ complete dyadic block};
    \node[anchor=west] at (8.10,0.38)
      {\tikz\draw[stairwarmline,dashed,thick]
       (0,0) rectangle (0.28,0.20);\ boundary block};
  \end{tikzpicture}
\caption{The two staircase regions on grids with the same square cell size:
$16\times16$ for the triangle and $16\times32$ for the wedge.  The wedge has
twice as many columns because its horizontal index ranges up to $2K$.  Blue
cells use the actual one-based indices, so each panel contains exactly
$\sum_{i=2}^{16}(i-1)=120$ admissible pairs.  Each green dyadic block
lies wholly in its region and is evaluated by one product $P_I T_J$; the gray
block is disjoint.  Orange dashed dyadic blocks meet a red boundary and are
recursively split.}
  \label{av:fig:staircases}
\end{figure}

\begin{definition}[Cross numerator]\label{av:def:cross}
Fix one of the two staircase regions~$\mathcal R$.  The two index roles are
distinct copies, even when their numerical intervals overlap.  For an interval
state $(I,J)$, its \emph{cross numerator} is
\[
 \operatorname{Cross}_{\mathcal R}(I,J)
 :=D_IE_J\!\sum_{(i,j)\in(I\times J)\cap\mathcal R}U_iV_j.
\]
When $\mathcal R$ is fixed by the call, we write
$\operatorname{Cross}(I,J)$.  Its common denominator is $D_IE_J$ even when the
underlying numerical intervals overlap.  The state is \emph{full} if
$I\times J\subseteq\mathcal R$, \emph{disjoint} if
$I\times J\cap\mathcal R=\varnothing$, and \emph{partial} otherwise.
\end{definition}

\begin{itemize}
\item If $I\times J$ is wholly contained in the staircase, then
\[
 \operatorname{Cross}(I,J)=P_I T_J.
\]
\item If it is disjoint from the staircase, the numerator is zero.
\item Otherwise split the longer interval (breaking ties arbitrarily), so
the dyadic rectangles retain bounded aspect ratio.  For
$I=I_0\sqcup I_1$,
\begin{equation}
 \operatorname{Cross}(I,J)
 =\operatorname{Cross}(I_0,J)D_{I_1}
  +\operatorname{Cross}(I_1,J)D_{I_0}.
 \label{av:eq:crossmerge}
\end{equation}
The analogous formula applies when $J$ is split.
\end{itemize}

The full/disjoint tests use only interval endpoints.  For $I=[a,b]$ and
$J=[c,d]$, they are
\begin{equation}
\begin{array}{c|c|c}
\mathcal R & I\times J\subseteq\mathcal R
           & I\times J\cap\mathcal R=\varnothing\\ \hline
\mathcal R_{\triangle} & a>d & b\leqslant c\\
\mathcal R_{\wedge} & b<c\ \text{and}\ d<2a
                     & a\geqslant d\ \text{or}\ c\geqslant2b
\end{array}
\label{av:eq:rectangle-endpoint-tests}
\end{equation}
Every other rectangle meets a boundary and is split.  These strict
inequalities implement $j<i$ and $i<j<2i$ exactly, including boundary cells.

The weights in~\eqref{av:eq:Csmall} and~\eqref{av:eq:Ccap} are absorbed by
the identities
\begin{align}
 \sum_{i<j<2i}(2i-j)U_iV_j
 &=2\operatorname{Wedge}(iU_i,V_j)
   -\operatorname{Wedge}(U_i,jV_j),
 \label{av:eq:wedgeweight}\\
 \sum_{i>j}(i-j)U_iV_j
 &=\operatorname{Triangle}(iU_i,V_j)
   -\operatorname{Triangle}(U_i,jV_j).
 \label{av:eq:triangleweight}
\end{align}

Equations~\eqref{av:eq:Hsmall}--\eqref{av:eq:Ccap} now give a complete
reduction to the two staircase operators.  Every summand is a separable
product $U_iV_j$, and the only nonseparable factors, $2i-j$ and $i-j$, are
removed by~\eqref{av:eq:wedgeweight} and~\eqref{av:eq:triangleweight}.  The
complete fixed expansion is
\begin{align}
 H_{\mathrm{s}}&=\operatorname{Wedge}(\mathsf A_i,\mathsf A_j),\nonumber\\
 C_{\mathrm{s}}&=2\operatorname{Wedge}(i\mathsf B_i,\mathsf A_j)
             -\operatorname{Wedge}(\mathsf B_i,j\mathsf A_j),\nonumber\\
 H_\cap&=\operatorname{Wedge}(\mathsf A_i,W_j)
          -\operatorname{Triangle}(Q^{(1)}_i,W_j),\nonumber\\
 C_\cap&=2\operatorname{Wedge}(i\mathsf B_i,W_j)
          -\operatorname{Wedge}(\mathsf B_i,jW_j)\nonumber\\
 &\quad-\operatorname{Triangle}(iQ^{(1)}_i,R_j)
       +\operatorname{Triangle}(Q^{(1)}_i,jR_j)\nonumber\\
 &\quad-\operatorname{Triangle}(iQ^{(2)}_i,W_j)
       +\operatorname{Triangle}(Q^{(2)}_i,jW_j).
 \label{av:eq:fixed-expansion}
\end{align}
Thus all four series require only a fixed number of staircase calls.  As
written, $H_\cap$ contains one triangular call and $C_\cap$ contains four.
The implementation evaluates these five terms with two recursive triangular
traversals; Appendix~\ref{app:all-values-optimized-schedule} gives the exact
algebraic reduction.

\section{Complexity analysis}\label{av:sec:complexity}

Having reduced both kernels to a constant number of staircase calls, this
section charges their polynomial work scale by scale.  The key point is that a
boundary meets few rectangles at a coarse scale, while the local polynomial
span of each such rectangle grows in the opposite proportion.

\begin{lemma}[Local span]\label{av:lem:span}
For an interval $I\subseteq[1,2K]$, the marked numerator $P_I$ of
Definition~\ref{av:def:marked-data} has dense span $O(K|I|)$.  A cross
numerator $\operatorname{Cross}_{\mathcal R}(I,J)$ has dense span
$O(K(|I|+|J|))$; in particular, a balanced state whose interval lengths are
$\Theta(h)$ has span $O(Kh)$.
\end{lemma}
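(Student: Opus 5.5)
The plan is to bound, for each stored polynomial, its lowest possible exponent and its highest possible exponent, using only the explicit form of the atoms; the dense span is at most the difference. All exponents are degrees before truncation. Truncation modulo $z^{N+1}$ can only shorten the span, so it can be ignored.

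\emph{Denominators.} For $I\subseteq[1,2K]$ with $r\leqslant2$, the polynomial $D_I=\prod_{i\in I}f_i^r$ has constant term one. Its degree is $r\sum_{i\in I}i\leqslant 2|I|\max I\leqslant 4K|I|$, so $D_I$ has span $O(K|I|)$. The same bound applies to $E_J$.

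\emph{Marked numerators.} Each summand of $P_I$ is $\chi(i)z^{\alpha i}D_I/f_i^r$. The quotient $D_I/f_i^r$ is a polynomial of degree at most $\deg D_I$. The monomial shift satisfies $\alpha i\leqslant (K+1)\cdot 2K$, which is \emph{not} $O(K|I|)$ when $|I|$ is small. This is the one point needing care. I would handle it with the stored-shift convention of Definition~\ref{av:def:dense-span}: store $P_I$ with the minimal shift $\alpha\min I$, so the relevant span is
\[
\deg(\text{summand}) - \alpha\min I\leqslant \alpha(\max I-\min I)+\deg D_I .
\]
Since $\alpha\leqslant K+1$ and $\max I-\min I<|I|$, the first term is $O(K|I|)$, and we already have $\deg D_I=O(K|I|)$. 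The factor $\chi(i)$ is a scalar and does not change degrees.

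A subtlety is that the atoms $W_j$ and $R_j$ of \eqref{av:eq:W}--\eqref{av:eq:R} have numerators with an extra monomial $z^{Kj}$ or $z^{(K+1)j}$. They are not of the single-monomial form \eqref{av:eq:genericatom} until they are split as linear combinations of atoms, and the staircase operators are linear in $V$. I would perform this split first, so that every call involves pure atoms with a single exponent $\alpha\leqslant K+1$. Alternatively, one can note that each numerator has span at most $(K+1)j-j\leqslant 2K^2$ relative to its lowest term; this is still only $O(K)$ per index \emph{difference}, provided the shifts are taken consistently. Splitting is cleaner, so I will commit to it. The merge \eqref{av:eq:markedmerge} is consistent with this bound: $P_{I_0}D_{I_1}$ has span at most $\operatorname{span}(P_{I_0})+\deg D_{I_1}$, plus the shift offset $\alpha|I_0|$. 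One can also verify the bound by induction over the tree.

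\emph{Cross numerators.} For a full state, $\operatorname{Cross}(I,J)=P_IT_J$, whose span is at most the sum of the two spans, $O(K(|I|+|J|))$. For a partial state I would avoid recursion and use the definition directly. The cross numerator equals $\sum_{(i,j)\in(I\times J)\cap\mathcal R}\chi(i)\psi(j)z^{\alpha i+\beta j}(D_I/f_i^r)(E_J/f_j^s)$. Each term has exponent at least $\alpha\min I+\beta\min J$, which is the stored shift obtained by adding the shifts through \eqref{av:eq:crossmerge}. Each term has exponent at most that shift plus $\alpha(|I|-1)+\beta(|J|-1)+\deg D_I+\deg E_J=O(K(|I|+|J|))$. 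The bounded aspect ratio from splitting the longer interval gives $|I|,|J|=\Theta(h)$, hence span $O(Kh)$.

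The main obstacle, and the reason the lemma is stated in terms of \emph{stored} spans, is the absolute shift $z^{\alpha i}$ with $\alpha$ up to $K+1$. The argument works only because shifts are subtracted, so it is the variation of $\alpha i$ over $I$, not its size, that enters the bound.
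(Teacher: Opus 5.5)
Your argument is correct and is essentially the paper's proof: bound $\deg D_I\leqslant 2\sum_{i\in I}i=O(K|I|)$, measure exponents relative to the stored shift so that only the spread $\alpha(\max I-\min I)=O(K|I|)$ of the monomials $z^{\alpha i}$ enters, and add the two role bounds for a cross state. Splitting $W_j$ and $R_j$ into single-shift atoms is the right choice and matches the paper's appendix, which notes that packing widely separated shifts into one dense numerator would destroy the bound; drop your ``alternatively'' remark, because a packed leaf such as $z^j-z^{Kj}$ alone has span $\Theta(Kj)$, which is $\Theta(K^2)$ for $j$ near $K$ regardless of $|I|$.
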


\begin{proof}
For one role, the denominator degree is at most
$2\sum_{i\in I}i=O(K|I|)$.  After extracting the monomial shift supplied by
the construction, the spread of the exponents $\alpha i$ is
$O(K|I|)$.  For a cross state, the corresponding bounds for the two index
roles add, giving $O(K(|I|+|J|))$.
\end{proof}

\begin{lemma}[Boundary states at one scale]\label{av:lem:boundary-states}
Fix a dyadic scale $h$.  In either staircase recursion, the number of partial
states whose two interval lengths are $\Theta(h)$ is $O(K/h)$.  The number of
full states returned at the same scale is also $O(K/h)$.
\end{lemma}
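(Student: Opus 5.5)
We argue geometrically: every partial state is a rectangle $I\times J$ met by one of the boundary lines, and we count how many rectangles of side $\Theta(h)$ in the recursion can meet a fixed line. The boundaries are $j=i$ for $\mathcal R_\triangle$, and $j=i$ and $j=2i$ for $\mathcal R_\wedge$. The ambient box is $[2,K]\times[1,K-1]$ or $[2,K]\times[3,2K-1]$, so each side has length $O(K)$.

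\emph{Step 1: the recursion tree at scale $h$.} The root intervals come from balanced halving, and we always split the longer interval. Hence every state $(I,J)$ has $|I|$ and $|J|$ within a constant factor of each other, say $h\leqslant|I|,|J|\leqslant 2h+O(1)$ for states counted at scale $h$. Moreover, the states at a fixed depth have pairwise disjoint rectangles $I\times J$, since each split partitions its parent. So the states of scale $\Theta(h)$ form $O(1)$ families (one per depth in a constant range), and each family consists of disjoint rectangles of side $\Theta(h)$ inside a box of side $O(K)$.

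\emph{Step 2: partial states.} A partial state meets a boundary line $j=ci$ with $c\in\{1,2\}$: its rectangle contains lattice cells on both sides of the line, so the line passes through the closed rectangle. A line of slope $1$ or $2$ crossing the box has length $O(K)$. We cover it by $O(K/h)$ segments of length $h$. The disjoint rectangles of side $\Theta(h)$ meeting any one segment number $O(1)$ by a packing argument: they all lie within the $O(h)$-neighbourhood of the segment, which has area $O(h^2)$, and each rectangle has area $\Theta(h^2)$. This gives $O(K/h)$ partial states per depth and per line, hence $O(K/h)$ in total at scale $\Theta(h)$.

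\emph{Step 3: full states.} A full (or disjoint) state returned at scale $\Theta(h)$ is a child of a partial state at the next coarser scale $\Theta(2h)$, because only partial states are split. Each parent has two children, so by Step 2 applied at scale $2h$ the number of full states is at most $2\cdot O(K/(2h))=O(K/h)$.

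The main obstacle is Step 1: justifying that interval lengths at a given depth are comparable. This matters because the two roots have unequal lengths in the wedge ($K$ versus $2K$), and balanced halving produces lengths $\lfloor\cdot/2\rfloor$ and $\lceil\cdot/2\rceil$. We handle it by tracking that the ratio $|J|/|I|$ stays within $[1/2-o(1),\,2+o(1)]$ once the longer interval is always the one split. Near the leaves, where lengths are $O(1)$, the counts are trivially bounded by the number of boundary cells, which is $O(K)$.
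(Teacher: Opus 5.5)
Your proof is correct and follows the paper's argument: bounded aspect ratio from always splitting the longer interval, $O(1)$ boundary-meeting rectangles per length-$\Theta(h)$ stretch of each boundary line of length $O(K)$, and charging each full state to its partial parent (plus, trivially, a possibly full constant-size root). The only difference is bookkeeping: you use area packing of the disjoint rectangles at each recursion depth along segments of the line, whereas the paper charges each partial state to one of the $O(K/h)$ first-role intervals $I$ of length $\Theta(h)$ and observes that each such $I$ has only $O(1)$ partners $J$ meeting the boundary over its strip.
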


\begin{proof}
Every interval occurring in the recursion is a node of one of the two balanced
dyadic interval trees.  The root side lengths differ by at most a constant
factor, and splitting a longest side preserves this property: up to endpoint
rounding and the constant-size base cases, every partial state satisfies
$c\leqslant |I|/|J|\leqslant C$ for absolute constants $c,C>0$.  Thus a state
whose first interval has length in $[h,2h)$ uses only $O(1)$ adjacent dyadic
levels for its second interval.

At any fixed level the first-role intervals have disjoint interiors and
partition a range of length $O(K)$, so only $O(K/h)$ of them have length
$\Theta(h)$.  Fix one such interval $I$.  Over the vertical strip
$I\times\mathbb R$, either boundary $j=i$ or $j=2i$ spans a second-coordinate
interval of length $O(h)$.  At each of the $O(1)$ admissible second-role levels,
only $O(1)$ intervals $J$ with disjoint interiors can meet that span.  Hence
each fixed $I$ participates in $O(1)$ partial states, and the total number of
partial states is $O(K/h)$.

A full state is returned without descendants.  Apart from a possible full
root in a constant-size base case, every returned full state is a child of a
partial state at the same scale up to a fixed factor.  Since a split has two
children, the same $O(K/h)$ bound holds for full states.
\end{proof}

\begin{lemma}[Total span]\label{av:lem:mass}
The sum of all local polynomial spans visited by one staircase computation is
$O(K^2\log K)=O(N\log N)$.
\end{lemma}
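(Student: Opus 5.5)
The plan is to sum the span bound of Lemma~\ref{av:lem:span} over all visited states, grouped by dyadic scale, and to control the number of states at each scale with Lemma~\ref{av:lem:boundary-states}. The visited polynomials fall into two families: the marked interval data $D_I,P_I,E_J,T_J$ built in the two balanced interval trees, and the cross numerators $\operatorname{Cross}(I,J)$ created by the staircase recursion. I would bound each family separately.

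For the interval trees, the first-role tree lives on a range of length $O(K)$ (at most $2K$ indices). At any fixed level its nodes have disjoint interiors, so the sum of $|I|$ over that level is $O(K)$. Lemma~\ref{av:lem:span} gives each node span $O(K|I|)$, hence $O(K^2)$ per level. There are $O(\log K)$ levels, so this tree contributes $O(K^2\log K)$; the second-role tree is handled identically. The denominators $D_I$ satisfy the same degree bound used inside the proof of Lemma~\ref{av:lem:span}, so they are covered by the same count.

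For the cross numerators, fix a dyadic scale $h=2^k$ with $1\leqslant h\leqslant 2K$. By the aspect-ratio property established in the proof of Lemma~\ref{av:lem:boundary-states}, every state at this scale has both interval lengths $\Theta(h)$. By Lemma~\ref{av:lem:boundary-states}, there are $O(K/h)$ partial states and $O(K/h)$ full states at this scale. Disjoint states store only the zero polynomial, which has span zero by Definition~\ref{av:def:dense-span}; they add nothing beyond $O(1)$ bookkeeping. By Lemma~\ref{av:lem:span}, each remaining state has span $O(Kh)$. The contribution of scale $h$ is therefore $O(K/h)\cdot O(Kh)=O(K^2)$. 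Summing over the $O(\log K)$ scales gives $O(K^2\log K)$. The merge formula~\eqref{av:eq:crossmerge} only multiplies a child numerator by a sibling denominator, so its intermediate products also have span $O(Kh)$. The constant-size base cases add $O(1)$ states of span $O(K)$ each. Finally, $K=\floor{\sqrt N}$ gives $K^2\log K=O(N\log N)$.

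The main obstacle is making the scale bookkeeping rigorous. Each recursive state has to be assigned to a single scale $h$ with both side lengths in $[ch,Ch]$. Endpoint rounding at the roots (for instance $|J|\approx 2|I|$ in the wedge) must not break the constant aspect ratio. I would settle this by the longest-side splitting rule: if the ratio starts bounded by $2$ up to an additive constant, each split preserves a bound of the form $|I|/|J|\in[1/3,3]$ once the lengths exceed a fixed constant. The finitely many smaller states are absorbed into the constant.
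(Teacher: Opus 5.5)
Your proposal is correct and follows essentially the same route as the paper. At each dyadic scale $h$, Lemma~\ref{av:lem:boundary-states} gives $O(K/h)$ non-disjoint cross states and the marked trees have $O(K/h)$ nodes of length $\Theta(h)$; Lemma~\ref{av:lem:span} bounds each span by $O(Kh)$, and summing $O(K^2)$ over $O(\log K)$ scales gives the claim. Your additional remarks (disjoint states have span zero, merge intermediates, and the rounding-robust aspect-ratio invariant) are sound refinements of details that the paper leaves implicit or handles inside the proof of Lemma~\ref{av:lem:boundary-states}.
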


\begin{proof}
At scale $h$, Lemma~\ref{av:lem:boundary-states} gives $O(K/h)$ partial or
full cross states that construct a polynomial; disjoint states return before
doing so.  Lemma~\ref{av:lem:span} bounds every such span by $O(Kh)$, for
$O(K^2)$ total cross span at that scale.  A marked interval tree likewise has
$O(K/h)$ nodes of length $\Theta(h)$, each with span $O(Kh)$.  Summing over
$O(\log K)$ dyadic scales proves the claim.
\end{proof}

\begin{proposition}[Staircase evaluation complexity]
\label{av:prop:staircase-complexity}
One call to $\operatorname{Wedge}$ or $\operatorname{Triangle}$, including its
marked denominator trees and root inversion, uses
$O(\M(N)\log N)$ arithmetic operations and $O(N\log N)$ coefficient-ring
elements of working memory.
\end{proposition}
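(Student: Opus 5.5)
The plan is to charge every polynomial product made by one staircase call to the span bounds of Lemmas~\ref{av:lem:span}--\ref{av:lem:mass}, and to use the regularity of $\M$ to turn the total span into an operation count.

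The work consists of four parts.
\begin{enumerate}
\item The two marked interval trees $(D_I,P_I)$ and $(E_J,T_J)$ are built bottom-up by~\eqref{av:eq:markedmerge}.
\item The cross recursion uses either the full-state products $P_IT_J$ or the split merges~\eqref{av:eq:crossmerge}.
\item The root quotient $\operatorname{Cross}(I_{\mathrm{root}},J_{\mathrm{root}})/(D_{I}E_{J})$ is formed modulo $z^{N+1}$.
\item The endpoint tests~\eqref{av:eq:rectangle-endpoint-tests} are $O(1)$ per state.
\end{enumerate}

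For parts 1 and 2, consider any merge or product at a node whose operands have stored dense spans $s_1,s_2$. I first multiply by the known monomial shifts, which is free. I then multiply the dense parts in $\M(s_1+s_2)$ operations, truncating at $N+1$, so the output span is at most $\min(N,s_1+s_2)$. By Lemma~\ref{av:lem:span}, at scale $h$ every operand has span $O(Kh)$, capped at $N$. Lemma~\ref{av:lem:boundary-states} and the tree-node count give $O(K/h)$ such nodes. Superadditivity of $\M$ then bounds the cost at scale $h$ by $O(K/h)\cdot\M(O(Kh))\le \M(O(K^2))=O(\M(N))$, using $\M(cN)=O(\M(N))$ and $K^2\le N$. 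Each merge uses a constant number of products, and its additions are linear in the span and hence dominated by $\M$. Summing over the $O(\log K)=O(\log N)$ dyadic scales gives $O(\M(N)\log N)$.

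For part 3, the denominator $D_IE_J$ is a product of factors $1-z^j$ with $j\ge1$, so its constant term is $1$. Newton iteration therefore inverts it modulo $z^{N+1}$ in $O(\M(N))$ operations with no division in $R$. One further multiplication gives the series.

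For memory, I will argue in one of two ways.
\begin{itemize}
\item If the trees are stored in full, then by Lemma~\ref{av:lem:mass} the total stored span over all tree nodes is $O(N\log N)$.
\item Otherwise, the recursion is evaluated depth-first: only the current root-to-leaf path of cross numerators is kept, together with the two trees, and the spans along a path are geometric in $h$.
\end{itemize}
Either way the working memory is $O(N\log N)$, to which Newton inversion adds $O(N)$ scratch space.

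The main obstacle is the scale-by-scale accounting in part 2. I need every state that actually builds a polynomial to be charged exactly once to a scale $h$ with both of its intervals $\Theta(h)$. I also need the truncation cap $\min(N,\cdot)$ to handle near-root states, where $Kh$ can exceed $N$, without ever invoking $\M$ on lengths larger than $O(N)$. I would handle this by noting two facts. First, longest-side splitting keeps aspect ratios bounded, which is the first paragraph of the proof of Lemma~\ref{av:lem:boundary-states}. Second, the number of scales with $Kh>N$ is $O(1)$, because $h\le 2K$ and $K^2\le N$. Each such scale costs $O(1)\cdot\M(N)$ in total, so these near-root scales do not affect the bound.
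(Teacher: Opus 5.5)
Your proposal is correct and follows the same route as the paper. You charge each dyadic scale $h$ with $O(K/h)$ states of span $O(Kh)$ via Lemmas~\ref{av:lem:span} and~\ref{av:lem:boundary-states}, and collapse them to $O(\M(N))$ by superadditivity and dilation stability. You then sum over $O(\log N)$ scales, invert the constant-term-one root denominator by Newton iteration, and bound memory by the total span of Lemma~\ref{av:lem:mass}. Your extra treatment of the truncation cap near the root and of depth-first stack memory is harmless; since every interval lies in $[1,2K]$, one has $Kh\leqslant 2K^2\leqslant 2N$ at every scale, so the cap is not even needed.
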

\begin{proof}
At one scale, Lemmas~\ref{av:lem:span}
and~\ref{av:lem:boundary-states} bound the sum of operand spans by $O(K^2)$,
and every recursion state performs only a constant number of polynomial
products and additions.  Superadditivity and monotonicity of $\M$,
together with stability under constant dilation, bound all products at that
scale by $O(\M(K^2))=O(\M(N))$; additions cost $O(K^2)$.  The
$O(\log K)$ scales therefore cost $O(\M(N)\log N)$.

The root denominators have degree
$O(\bigl(\sum_{j\leqslant2K}j\bigr))=O(K^2)=O(N)$ and constant term one.
Newton inversion consequently costs $O(\M(N))$
\cite{brent-kung,modern-computer-algebra}.  Retaining all marked-tree and
cross-state polynomials uses the total-span bound
$O(K^2\log K)=O(N\log N)$; depth-first evaluation cannot increase it.  A
constant number of aligned calls may share a marked tree or final inverse
without changing either asymptotic bound.
\end{proof}

\section{Primitive extraction and recovery of \texorpdfstring{$F$}{F}}

The kernels above still include nonprimitive directions.  This section
applies M\"obius inversion coefficientwise and then recovers the whole table
$F(1),\ldots,F(N)$ from ordinary prefix sums; we use the standard divisor-sum
form of M\"obius inversion~\cite{apostol}.

For $1\leqslant m\leqslant N$, write $h_m=\coeff{m}{H_0}$ and
$c_m=\coeff{m}{C_0}$, and set all zeroth entries to zero.  M\"obius inversion
on the common divisor of the direction gives
\begin{equation}
 h_m^*=\sum_{d\mid m}\mu(d)h_{m/d},\qquad
 c_m^*=\sum_{d\mid m}\mu(d)d\,c_{m/d}.
 \label{av:eq:mobius}
\end{equation}
Here $\mu$ is the M\"obius function.  The factor $d$ in the second formula
occurs because the gap weight scales
linearly with the direction.

For the side diagonal $x=y$, primitive extraction has the closed form
\begin{equation}
 d_m^*=\floor{\frac{m-1}{2}}.
 \label{av:eq:diagonal-closed}
\end{equation}
Indeed, for a fixed sum $s\geqslant3$ there are $\varphi(s)/2$ primitive pairs
$a>b>0$ with $a+b=s$, where $\varphi$ is Euler's totient function.  Together
with $\sum_{s\mid m}\varphi(s)=m$, the missing cases $s=1,2$ give exactly the
floor in~\eqref{av:eq:diagonal-closed}.

\begin{definition}[Primitive threshold arrays]\label{av:def:threshold-arrays}
Put
\begin{equation}
 X_m=4h_m^*+2d_m^*,\qquad Y_m=4c_m^*.
 \label{av:eq:thresholdarrays}
\end{equation}
The arrays $(X_m)_{m\leqslant N}$ and $(Y_m)_{m\leqslant N}$ are the
\emph{primitive threshold arrays}: $X_m$ supplies the quadratic coefficient
and $Y_m$ the linear gap coefficient for threshold~$m$.
\end{definition}
Then the strict-direction contribution is
\begin{equation}
 F_1(n)=
 \sum_{1\leqslant m<n}\bigl(X_m(n-m)^2+Y_m(n-m)\bigr),
 \qquad F(n)=F_0(n)+F_1(n).
 \label{av:eq:recover}
\end{equation}
All values of~\eqref{av:eq:recover} are obtained in linear time from the five
prefix sums
\[
 \sum X_m,\quad\sum mX_m,\quad\sum m^2X_m,
 \quad\sum Y_m,\quad\sum mY_m.
\]
For a direct streaming implementation, define
\[
 S_{rX}(n)=\sum_{m<n}m^rX_m\quad(0\leqslant r\leqslant2),
 \qquad
 S_{rY}(n)=\sum_{m<n}m^rY_m\quad(0\leqslant r\leqslant1).
\]
Expanding the two powers in~\eqref{av:eq:recover} gives the exact update
formula used by the implementation:
\begin{equation}
 F_1(n)=n^2S_{0X}(n)-2nS_{1X}(n)+S_{2X}(n)
 +nS_{0Y}(n)-S_{1Y}(n),
 \qquad F(n)=F_0(n)+F_1(n).
 \label{av:eq:streaming-recover}
\end{equation}
After emitting $F(n)$, update
$S_{rX}\gets S_{rX}+n^rX_n$ and
$S_{rY}\gets S_{rY}+n^rY_n$.  Thus the sums used for the next endpoint still
contain precisely the indices $m<n+1$.

\begin{proposition}[Exact primitive recovery]\label{av:prop:recovery}
Equations~\eqref{av:eq:mobius}--\eqref{av:eq:recover} recover exactly the
geometric count~\eqref{av:eq:F1} for every $1\leqslant n\leqslant N$.
The two M\"obius transforms take $O(N\log N)$ arithmetic operations by direct
divisor--multiple loops, and the final recovery of all $N$ values takes
$O(N)$ operations.
\end{proposition}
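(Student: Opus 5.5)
Start from \eqref{av:eq:F1} and regroup its terms by the threshold $m=ax+by$. The gap is $g=(ax+by)-(ay+bx)$, so each summand becomes
\[
 \mult(x,y)(n-m)(n-m+g)=\mult(x,y)\bigl((n-m)^2+g(n-m)\bigr).
\]
The condition $ax+by\leqslant n$ together with a zero factor at $m=n$ lets us restrict to $m<n$. It is therefore enough to prove that for each $1\leqslant m\leqslant N$,
\[
 X_m=\sum_{\text{prim},\,ax+by=m}\mult(x,y),\qquad
 Y_m=\sum_{\text{prim},\,ax+by=m}\mult(x,y)\,g.
\]
Here the sums run over primitive $a>b>0$ and $x\geqslant y\geqslant1$. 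We split into the strict part $x>y$, which has weight $4$, and the diagonal part $x=y$, which has weight $2$. On the diagonal $g=0$, so the diagonal contributes only to $X_m$. This matches the shape of \eqref{av:eq:thresholdarrays}.

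For the strict part, Lemma~\ref{av:lem:layer-series} identifies $h_m$ and $c_m$ with the sums over all strict tuples, primitive or not. Write $e=\gcd(a,b)$ and $(a,b)=e(a',b')$. Then $m=e(a'x+b'y)$, so $e\mid m$, and the gap scales as $g=e\,g'$. This gives the divisor identities
\[
 h_m=\sum_{e\mid m}h^*_{m/e},\qquad c_m=\sum_{e\mid m}e\,c^*_{m/e},
\]
where $h^*,c^*$ denote the primitive versions. The first is inverted by ordinary M\"obius inversion. For the second, set $\tilde c_m=c_m/m$ and $\tilde c^*_k=c^*_k/k$. Then $\tilde c_m=\sum_{e\mid m}\tilde c^*_{m/e}$, since $e\,c^*_{m/e}/m=\tilde c^*_{m/e}$. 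Inverting this and multiplying back by $m$ gives $c^*_m=\sum_{d\mid m}\mu(d)\,d\,c_{m/d}$. To avoid dividing by $m$ in a general ring, we instead verify directly that Dirichlet convolution with $e\mapsto e\mu(e)$ inverts convolution with $e\mapsto e$; this holds because $\sum_{d\mid n}d\mu(d)(n/d)=n\sum_{d\mid n}\mu(d)=[n=1]$. Either way we obtain \eqref{av:eq:mobius}.

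For the diagonal, $x=y$ gives $m=x(a+b)$, so the count is $d^*_m=\sum_{s\mid m,\,s\geqslant3}\varphi(s)/2$. We use $\sum_{s\mid m}\varphi(s)=m$ and subtract the terms $s=1$ (which contributes $1$) and $s=2$ (which contributes $1$ when $m$ is even). This leaves $(m-1-[2\mid m])/2=\floor{(m-1)/2}$. Together with \eqref{av:eq:recover}, this establishes the identity.

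\textbf{Complexity.} Each M\"obius transform is computed by looping over each $d$ with $\mu(d)\neq0$ and then over its multiples $m\leqslant N$. Precomputing $\mu$ with a sieve costs $O(N\log\log N)$, and the loops cost $\sum_d N/d=O(N\log N)$. The recovery step \eqref{av:eq:streaming-recover} uses a constant number of operations per $n$, since the five prefix sums are updated incrementally, so it takes $O(N)$ operations. The step that needs the most care is the weighted inversion $c^*$: the factor $d$ must appear exactly as stated, and we must also check that the strict restriction $x>y$ and the bound $a>b$ are preserved under the scaling $(a,b)=e(a',b')$. Both hold because $a>b\iff a'>b'$, and the side variables are unchanged by the scaling.
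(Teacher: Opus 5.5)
Your proof is correct and follows essentially the same route as the paper: the scaling bijection $(a,b)=e(a',b')$ gives $h_m=\sum_{e\mid m}h^*_{m/e}$ and $c_m=\sum_{e\mid m}e\,c^*_{m/e}$, which are inverted with $\mu$ and $d\mapsto d\mu(d)$. You then rewrite the placement count as $(n-m)^2+g(n-m)$ with weights $4$ (strict) and $2$ (diagonal), and bound the cost by divisor--multiple loops plus incremental prefix sums. Your explicit check that $d\mapsto d\mu(d)$ inverts $d\mapsto d$ over a general ring, and your totient count giving $d^*_m=\lfloor (m-1)/2\rfloor$, fill in details that the paper leaves implicit or states only in the main text.
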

\begin{proof}[Proof sketch]
Coefficientwise M\"obius inversion removes the common divisor of $(a,b)$; the
gap kernel gains the displayed factor $d$ because its weight is homogeneous
of degree one in the direction.  For a primitive tuple, the two box
dimensions differ by $(a-b)(x-y)$, which gives $F_1(n)$ in
\eqref{av:eq:recover}; adding $F_0(n)$ gives the complete count.
Divisor--multiple loops cost $O(N\log N)$ and the five prefixes recover each
$F(n)$ in constant time.  Full details are in
Appendix~\ref{app:primitive-recovery}.
\end{proof}

\begin{proof}[Proof of Theorem~\ref{av:thm:main}]
Lemma~\ref{av:lem:cover} is an exact inclusion--exclusion identity, and
equations~\eqref{av:eq:Hsmall}--\eqref{av:eq:Ccap} give its four rational
series.  Equations~\eqref{av:eq:wedgeweight} and
\eqref{av:eq:triangleweight} rewrite them as a constant number of
$\operatorname{Wedge}$ and $\operatorname{Triangle}$ calls.
Proposition~\ref{av:prop:staircase-complexity} evaluates those calls in
$O(\M(N)\log N)$ operations and $O(N\log N)$ ring elements.  Proposition
 \ref{av:prop:recovery} then extracts primitive directions and all output
 values in $O(N\log N)$ additional operations and $O(N)$ additional ring
 elements.
Since polynomial multiplication is at least linear, this is within the
claimed arithmetic bound.

Over $\mathbb Z$, every series operation above is integral because each
inverted denominator has constant term one.  Evaluate
$F_0(n)=n(n-1)^2(2n-1)/6$ by exact integer division before adding it in
\eqref{av:eq:recover}; this proves the integer variant without assuming that
$6$ is a unit.  Under the modular hypothesis of
Appendix~\ref{av:sec:beyond}, the same identities are evaluated in each field
and the final residues are uniquely reconstructed because the modulus product
exceeds $F(N)$.  Sequential modular runs preserve the stated peak
ring-element memory bound.
\end{proof}

\section{Complete algorithm}\label{av:sec:algorithm}

The recursion~\eqref{av:eq:markedmerge}--\eqref{av:eq:crossmerge} is the
complete evaluator for one wedge or triangular sum.  The following
top-level pseudocode shows how those sums are assembled into the output
table.  It is written over one coefficient ring; a modular implementation
repeats it in the required fields and reconstructs only the final
coefficients.

\begin{algorithm}[H]
  \small
  \caption{Complete all-values algorithm.}
  \label{av:alg:all-values}
  \begin{algorithmic}
    \AlgLine{0}{\textbf{procedure} $\textsc{AllRectangleCounts}(N)$}
    \AlgLine{1}{$K\gets\floor{\sqrt N}$; truncate every series modulo $z^{N+1}$}
    \AlgLine{1}{form the atom families $\mathsf A,\mathsf B,W,R,Q^{(1)},Q^{(2)}$}
    \AlgLine{1}{evaluate $H_{\mathrm{s}},C_{\mathrm{s}},H_\cap,C_\cap$ by
      \eqref{av:eq:Hsmall}--\eqref{av:eq:Ccap} using the staircase recursion}
    \AlgLine{1}{$H_0\gets2H_{\mathrm{s}}-H_\cap$;
      $C_0\gets2C_{\mathrm{s}}-C_\cap$}
    \AlgLine{1}{extract $h_m=[z^m]H_0$ and $c_m=[z^m]C_0$}
    \AlgLine{1}{compute $h_m^*,c_m^*$ by the two M\"obius transforms
      in~\eqref{av:eq:mobius}}
    \AlgLine{1}{form $X_m,Y_m$ by~\eqref{av:eq:thresholdarrays} and their five
      prefix sums}
    \AlgLine{1}{\textbf{for} $n=1,\ldots,N$ \textbf{do} recover $F(n)$ from
      \eqref{av:eq:recover}}
    \AlgLine{1}{\textbf{return} $(F(1),\ldots,F(N))$}
    \AlgLine{0}{\textbf{end procedure}}
  \end{algorithmic}
\end{algorithm}

For completeness, the two lower-level algorithms used inside
Algorithm~\ref{av:alg:all-values} are given next.  For
$\mathcal R=\mathcal R_{\wedge}$ the roots are $I=[2,K]$ and
$J=[3,2K-1]$; for $\mathcal R=\mathcal R_{\triangle}$ they are $I=[2,K]$
and $J=[1,K-1]$.  Empty endpoint ranges are omitted.

\begin{algorithm}[H]
  \small
  \caption{Local-denominator evaluation of one staircase sum.}
  \label{av:alg:staircase}
  \begin{algorithmic}
    \AlgLine{0}{\textbf{procedure} $\textsc{Staircase}(U,V,\mathcal R)$}
    \AlgLine{1}{choose $(I_{\rm root},J_{\rm root})$ from the two ranges above}
    \AlgLine{1}{build the balanced marked trees $(D_I,P_I)$ and $(E_J,T_J)$}
    \AlgLine{1}{$S\gets\textsc{Cross}(I_{\rm root},J_{\rm root},\mathcal R)$}
    \AlgLine{1}{\textbf{return}
      $S/(D_{I_{\rm root}}E_{J_{\rm root}})\pmod{z^{N+1}}$}
    \AlgLine{0}{\textbf{end procedure}}
    \AlgLine{0}{\textbf{procedure} $\textsc{Cross}(I,J,\mathcal R)$}
    \AlgLine{1}{\textbf{if} $I\times J$ is disjoint from $\mathcal R$
      \textbf{then return} $0$}
    \AlgLine{1}{\textbf{if} $I\times J\subseteq\mathcal R$
      \textbf{then return} $P_I T_J$}
    \AlgLine{1}{\textbf{if} $|I|\geqslant|J|$ \textbf{then}}
    \AlgLine{2}{split $I=I_0\sqcup I_1$ into its balanced children}
    \AlgLine{2}{\textbf{return}
      $\textsc{Cross}(I_0,J,\mathcal R)D_{I_1}
      +\textsc{Cross}(I_1,J,\mathcal R)D_{I_0}$}
    \AlgLine{1}{\textbf{else}}
    \AlgLine{2}{split $J=J_0\sqcup J_1$ into its balanced children}
    \AlgLine{2}{\textbf{return}
      $\textsc{Cross}(I,J_0,\mathcal R)E_{J_1}
      +\textsc{Cross}(I,J_1,\mathcal R)E_{J_0}$}
    \AlgLine{1}{\textbf{end if}}
    \AlgLine{0}{\textbf{end procedure}}
  \end{algorithmic}
\end{algorithm}

The call uses $\mathcal R_{\wedge}$ for $\operatorname{Wedge}$ and
$\mathcal R_{\triangle}$ for $\operatorname{Triangle}$.  Multiplying an atom
family by its index supplies the weighted variants in
\eqref{av:eq:wedgeweight} and~\eqref{av:eq:triangleweight}.  The optimized
schedule uses two forms of fusion.  \emph{Rank-two fusion} carries two marked
numerators through the same denominator-lifting recursion.  \emph{Frontier
fusion} evaluates same-region families separately down to a common fixed-depth
frontier, adds their equal-denominator numerators there, and shares all lifts
above the frontier.  The implementation combines these operations with the
two-traversal reduction of Appendix~\ref{app:all-values-optimized-schedule} and
the shared common-denominator construction of
Appendix~\ref{av:sec:reductions}.

\begin{algorithm}[H]
  \small
  \caption{Optimized ring-level schedule for the main text.}
  \label{av:alg:optimized-schedule}
  \begin{algorithmic}
    \AlgLine{0}{\textbf{procedure} $\textsc{OptimizedKernels}(N)$}
    \AlgLine{1}{$K\gets\floor{\sqrt N}$; truncate every series modulo $z^{N+1}$}
    \AlgLine{1}{build one denominator tree on $[1,2K]$ and reuse its left half
      for the first index role}
    \AlgLine{1}{construct index-weighted marked numerators by the logarithmic-
      derivative identity~\eqref{av:eq:log-derivative-marking}}
    \AlgLine{1}{evaluate wedge terms by rank-two fusion and frontier fusion}
    \AlgLine{1}{replace the five triangular traversals by the two traversals
      in~\eqref{av:eq:five-to-two} plus~\eqref{av:eq:closed-weighted-triangle}}
    \AlgLine{1}{lift $H_0,C_0$ to one denominator and invert it once}
    \AlgLine{1}{\textbf{return} the coefficient arrays $(h_m),(c_m)$}
    \AlgLine{0}{\textbf{end procedure}}
  \end{algorithmic}
\end{algorithm}

Appendix~\ref{app:all-values-optimized-schedule} derives the two-traversal
reduction used by Algorithm~\ref{av:alg:optimized-schedule}.
Appendix~\ref{app:all-values-arithmetic} gives the shared-tree schedule, exact
modular realization, CRT reconstruction, and coefficient-growth bounds.

\section{Experiments}\label{sec:all-values-experiments}

The experiment compares two single-threaded all-values algorithms: the
$O(N^{3/2})$ method of~\cite{paper} and the $O(N\log^2 N)$ method developed
here.  Both methods construct the complete prefix $F(1),\ldots,F(N)$.  For a fair all-values
baseline, the threshold sums of~\cite{paper} are accumulated into every prefix
endpoint by range additions; this adds $O(N\log N)$ work and does not change
its $O(N^{3/2})$ bound.  During timing, endpoint-only output suppresses all but
the last printed value; it does not suppress construction of the prefix.

For the measured range, the baseline stores exact values in signed 128-bit
integers.  The proposed
program implements Algorithms~\ref{av:alg:all-values}--\ref{av:alg:optimized-schedule}
and the formulas in Appendices~\ref{app:all-values-executable-formulas}
and~\ref{av:sec:exact-arithmetic}.  Its two modular computations use
the primes in~\eqref{av:eq:implementation-primes} sequentially, followed by
incremental CRT.  Direct convolution is used when the shorter operand has at
most 48 coefficients, and
families sharing a staircase recursion are combined after four common levels;
these two tuning constants are fixed for every input size.

The programs were run on an AMD Ryzen 9 5950X system with 128 GiB of physical
memory.
GCC 16.1.0 from MSYS2 UCRT64 compiled both as C++20 with
\texttt{-O3}, \texttt{-DNDEBUG}, and \texttt{-march=native}.  The active power
scheme was \texttt{High performance}.  Each process ran with priority
class \texttt{High} and processor-affinity mask \texttt{0x4}, which selects one
logical processor.  Measurements cover $N=2^{10},\ldots,2^{23}$.  Each plotted
point is the arithmetic mean of ten runs.  The ancillary data record the means
used in the figure; the plot is a descriptive scaling comparison, and no
statistical inference from the timing sample is intended.

Figure~\ref{fig:all-values-timing} shows that the two running-time curves meet
near $N=2^{21}$, after which the proposed implementation overtakes the
baseline.  When divided by $N\log_2^2N$, the proposed running time remains
approximately stable across the measured range.  By contrast, the baseline
time divided by $N^{3/2}$ increases at the larger sizes.  This trend is
consistent with the baseline's increasingly sparse updates across a growing
array, although no hardware-counter measurements were made.  The proposed
method has a larger constant at small sizes; its dominant NTT passes use dense,
regular access patterns, which plausibly contributes to the steadier measured
throughput as the problem grows.

\begin{figure}[H]
  \centering
  \resizebox{\textwidth}{!}{%
  \begin{tikzpicture}
  \begin{groupplot}[
    group style={group size=2 by 1,horizontal sep=1.75cm},
    width=6.05cm,height=4.85cm,
    xmin=9.5,xmax=23.5,
    xtick={10,11,12,13,14,15,16,17,18,19,20,21,22,23},
    xticklabels={$2^{10}$,$2^{11}$,$2^{12}$,$2^{13}$,$2^{14}$,
                 $2^{15}$,$2^{16}$,$2^{17}$,$2^{18}$,$2^{19}$,
                 $2^{20}$,$2^{21}$,$2^{22}$,$2^{23}$},
    xlabel={maximum size $N=2^k$},
    grid=major,grid style={benchmarkgrid!65},
    axis line style={black!80},
    x tick label style={font=\scriptsize,rotate=45,anchor=east},
    y tick label style={font=\scriptsize},
    label style={font=\footnotesize},
    xlabel style={yshift=2pt},
    title style={font=\footnotesize,
                 at={(axis description cs:0.5,1.005)},anchor=south},
    legend style={font=\scriptsize,draw=benchmarkgrid,fill=white,
                  fill opacity=0.94,text opacity=1,cells={anchor=west},
                  inner xsep=3pt,inner ysep=2pt}
  ]
    \nextgroupplot[
      title={Running times of the algorithms},
      ylabel={running time (seconds)},
      ymode=log,ymin=0.001,ymax=20000,
      ytick={0.001,0.01,0.1,1,10,100,1000,10000},
      legend pos=south east
    ]
      \addplot[benchmarkbaseline,thick,mark=triangle*,mark size=2.3pt]
        coordinates {
          (10,0.002214)(11,0.005048)(12,0.015111)(13,0.045226)
          (14,0.125147)(15,0.379615)(16,1.089356)(17,3.941845)
          (18,10.224035)(19,36.820932)(20,126.022173)
          (21,447.077930)(22,1340.156035)(23,4165.479790)
        };
      \addlegendentry{$O(N^{3/2})$}
      \addplot[benchmarkproposed,thick,mark=square*,mark size=2.1pt]
        coordinates {
          (10,0.048258)(11,0.118476)(12,0.273383)(13,0.632560)
          (14,1.412468)(15,3.478712)(16,7.603546)(17,17.318354)
          (18,38.028357)(19,86.722955)(20,191.880799)
          (21,428.397063)(22,924.307940)(23,2112.573910)
        };
      \addlegendentry{$O(N\log_2^2 N)$}

    \nextgroupplot[
      title={Normalized running times},
      ylabel={normalized running time ($\mu$s)},
      ymode=log,ymin=0.04,ymax=0.6,
      ytick={0.05,0.1,0.2,0.5},
      yticklabels={$0.05$,$0.1$,$0.2$,$0.5$},
      legend style={at={(axis description cs:0.98,0.72)},anchor=east}
    ]
      \addplot[benchmarkbaseline,thick,mark=triangle*,mark size=2.3pt]
        coordinates {
          (10,0.067565918)(11,0.054465867)(12,0.057643890)
          (13,0.060996268)(14,0.059674740)(15,0.063998304)
          (16,0.064930677)(17,0.083068172)(18,0.076174997)
          (19,0.096992890)(20,0.117367295)(21,0.147210358)
          (22,0.156014696)(23,0.171447115)
        };
      \addlegendentry{$O(N^{3/2})$}
      \addplot[benchmarkproposed,thick,mark=square*,mark size=2.1pt]
        coordinates {
          (10,0.471269531)(11,0.478095945)(12,0.463499281)
          (13,0.456904124)(14,0.439847985)(15,0.471830512)
          (16,0.453206658)(17,0.457192233)(18,0.447736634)
          (19,0.458201966)(20,0.457479475)(21,0.463210064)
          (22,0.455314433)(23,0.476065094)
        };
      \addlegendentry{$O(N\log_2^2 N)$}
  \end{groupplot}
  \end{tikzpicture}%
  }
  \caption{Mean running times over ten runs under the common one-processor
  launch policy.  Left: elapsed time on a logarithmic scale.  Right: the baseline
  time normalized by $N^{3/2}$ and the proposed time normalized by
  $N\log_2^2N$; both normalized values are in microseconds.  Only the arithmetic
  means are shown; the figure is intended as a descriptive scaling comparison.}
  \label{fig:all-values-timing}
\end{figure}

\section{Reproducibility}

The reference implementations and build and benchmark instructions are
available in the
\href{https://github.com/flykiller/lattice-rectangles-all-values}
{\nolinkurl{flykiller/lattice-rectangles-all-values}} repository on GitHub.

\section{Future work}\label{sec:future-work}

Two extensions appear especially useful: parallel evaluation that preserves
shared denominator data, and block-separable production of the output prefix.

The two modular runs, the butterflies within one NTT stage, and independent
staircase states at a fixed
recursion scale offer immediate parallelism.  A useful many-core
implementation should nevertheless preserve the near-linear total work and
the $O(N\log N)$ shared-memory bound: assigning whole recursion subtrees to
workers without coordination can replicate large denominator polynomials and
make memory, rather than arithmetic, the limiting resource.

A more ambitious goal is to make the output block-separable after one shared
preprocessing phase.  Partition $[1,N]$ into disjoint intervals
$[k_r,\ell_r]$ and let worker $r$ compute
$F(k_r),\ldots,F(\ell_r)$ independently.  Such preprocessing would have to
turn the marked denominator trees and the common series inverse into a
representation supporting coefficient-window queries, for example through
middle products or transposed polynomial multiplication.  It would also have
to expose the input coefficients needed by the divisor--multiple M\"obius
transform without repeating the full transform on every worker.  The final
recovery step is already compatible with this organization: the five
prefix-sum accumulators evaluated immediately before $k_r$ seed a local scan
of~\eqref{av:eq:recover} over the whole block.  The unresolved question is
whether the preceding kernel construction and primitive extraction can be
given the same block interface with total work $O(\M(N)\log N)$, shared space
$O(N\log N)$, and running time close to the total work divided by the number
of processors, up to polylogarithmic overhead.

\section{Conclusion}

The complete prefix of lattice-rectangle counts can be computed exactly in
$O(\M(N)\log N)$ coefficient-ring operations and $O(N\log N)$ ring elements
of working memory.  The square-root cover is applied before coefficient
extraction, and the remaining rational staircase sums are evaluated over local
product denominators.  Primitive extraction and five prefix sums then recover
all values.  Under transform-friendly quasi-linear multiplication the bound
specializes to $O(N\log^2 N)$.  The measured normalization is consistent with
that scaling: the running-time curves meet near $N=2^{21}$ and the modular
implementation is faster at the larger measured sizes while retaining
certified exact reconstruction.

\clearpage
\appendix

\section{Rational-series details}
\label{app:all-values-details}

This appendix supplies the algebraic derivations behind the optimized execution
schedule, executable formulas, and the detailed recovery proof omitted from
the main line of the argument.  The central polynomial-span proof is in
Section~\ref{av:sec:complexity}.

\subsection{Derivation of the overlap formulas}
\label{app:overlap-derivation}

This subsection verifies the two finite triangular sums used in the overlap
correction, using standard finite geometric and arithmetic--geometric
identities~\cite{concrete-math}, and shows that every resulting factor has the
generic atom form in~\eqref{av:eq:genericatom}.

Fix $a>b$, write $\ell=a+b$, and recall that the overlap condition is
$t,y\geqslant1$ and $t+y\leqslant K$.  Summing first over $t$ gives
\begin{equation}
 \sum_{\substack{t,y\geqslant1\\t+y\leqslant K}}z^{at+\ell y}
 =\mathsf{A}_aW_\ell-Q^{(1)}_aW_b.
 \label{av:eq:overlapH}
\end{equation}
Applying $u\,d/du$ to the finite sum with $u^t$ in place of $z^{at}$ and then
setting $u=z^a$, or evaluating the arithmetic--geometric sum directly, gives
\begin{equation}
 \sum_{\substack{t,y\geqslant1\\t+y\leqslant K}}tz^{at+\ell y}
 =\mathsf{B}_aW_\ell-Q^{(1)}_aR_b-Q^{(2)}_aW_b.
 \label{av:eq:overlapC}
\end{equation}
Summing the first identity over $2\leqslant a\leqslant K$,
$1\leqslant b<a$ gives~\eqref{av:eq:Hcap}; multiplying the second identity by
$a-b$ before the same summation gives~\eqref{av:eq:Ccap}.  Together with
\eqref{av:eq:Hsmall} and~\eqref{av:eq:Csmall}, every factor is a linear
combination of a constant number of atoms
\[
 \frac{\chi(j)z^{\alpha j}}{(1-z^j)^r},
 \qquad r\in\{1,2\},\quad \alpha\in\{1,\ldots,K+1\},
\]
where $\chi(j)$ has degree at most one.  Therefore the overlap is a constant
number of wedge and triangular staircase sums.

\subsection{Optimized operator schedule and the two-traversal reduction}
\label{app:all-values-optimized-schedule}

The fixed expansion~\eqref{av:eq:fixed-expansion} is an identity of rational
series.  This subsection turns it into the optimized schedule used by the
implementation.  In the terminology introduced before
Algorithm~\ref{av:alg:optimized-schedule}, rank-two fusion carries the two
marked numerators belonging to one gap weight, while frontier fusion shares
the upper lifts of several same-region recursions.  The weighted
overlap nevertheless contains five literal triangular traversals after the
monomials of $W$ and $R$ are expanded.  The rest of this subsection reduces
those five traversals to two before invoking the recursion.

For
\[
 \operatorname{Triangle}_{\mathrm{gap}}(U,V)
 :=\sum_{1\leqslant j<i\leqslant K}(i-j)U_iV_j
\]
we extend every first-role family that starts at $i=2$ by setting its
$i=1$ term to zero.  Thus all families in the identities below have the same
index set $\{1,\ldots,K\}$, all unqualified index sums in this subsection use
that set, and the exact transposition identity is
\begin{equation}
 \operatorname{Triangle}_{\mathrm{gap}}(U,V)
 =\operatorname{Triangle}_{\mathrm{gap}}(V,U)
 +\left(\sum_i iU_i\right)\left(\sum_jV_j\right)
 -\left(\sum_iU_i\right)\left(\sum_j jV_j\right).
 \label{av:eq:triangle-transpose}
\end{equation}
Indeed, the two triangular sums partition the off-diagonal part of the
separable double sum on the right, while the diagonal has zero weight.

Write
\[
 \mathsf A_j^{(K)}=\frac{z^{Kj}}{f_j},\qquad
 R_j^{(12)}=\frac{(K-1)z^j-Kz^{2j}}{f_j^2},\qquad
 R_j^{(3)}=\frac{z^{(K+1)j}}{f_j^2},
\]
so that $W=\mathsf A-\mathsf A^{(K)}$ and
$R=R^{(12)}+R^{(3)}$.  Also put
\[
 \operatorname{Sep}(U,V)
 :=\left(\sum_i iU_i\right)\left(\sum_jV_j\right)
   -\left(\sum_iU_i\right)\left(\sum_j jV_j\right).
\]
After transposition, the last atom $R^{(3)}$ and the factor
$-\mathsf A^{(K)}$ combine.  Because the transposed $Q^{(1)}$ factor is active
only for the original first index $j\geqslant2$, their combined second factor
is
\begin{equation}
 \mathsf V_j=
 \begin{cases}
  -z^K/f_1,&j=1,\\
  \bigl(-z^{Kj}+z^{(K+1)j}\bigr)/f_j=-z^{Kj},
     &2\leqslant j\leqslant K.
 \end{cases}
 \label{av:eq:closed-triangle-factor}
\end{equation}
Equivalently,
\[
 \mathsf V_j=-z^{Kj}-\mathbf 1_{j=1}\frac{z^{K+1}}{f_1}.
\]
Here $\mathbf 1_{P}$ equals one when the predicate $P$ holds and zero
otherwise.
Applying~\eqref{av:eq:triangle-transpose} gives
\begin{align}
 &\operatorname{Triangle}_{\mathrm{gap}}(Q^{(1)},R)
  +\operatorname{Triangle}_{\mathrm{gap}}(Q^{(2)},W)\nonumber\\
 &\qquad=
  \operatorname{Triangle}_{\mathrm{gap}}(R^{(12)},Q^{(1)})
  +\operatorname{Triangle}_{\mathrm{gap}}(Q^{(2)},\mathsf A)\nonumber\\
 &\qquad\quad
  +\operatorname{Triangle}_{\mathrm{gap}}(Q^{(2)},\mathsf V)
  +\operatorname{Sep}(Q^{(1)},R).
 \label{av:eq:five-to-two}
\end{align}
The first two terms on the right are the two remaining recursive traversals.
The first uses one marked numerator because the shifts $z^i$ and $z^{2i}$
are adjacent.  The fourth term, $\operatorname{Sep}(Q^{(1)},R)$, uses only
root sums, while the third is evaluated without a staircase traversal.

Put $w=z^K$ and define
\[
 S_0=\sum_{i=2}^{K}Q_i^{(2)},\qquad
 S_1=\sum_{i=2}^{K}iQ_i^{(2)},\qquad
 S_w=\sum_{i=2}^{K}w^iQ_i^{(2)}.
\]
The elementary finite identity
\begin{equation}
 \sum_{j=1}^{i-1}(i-j)w^j
 =\frac{(i-1)w-iw^2+w^{i+1}}{(1-w)^2}
 \label{av:eq:finite-weighted-geometric}
\end{equation}
gives the combined triangular term in closed form:
\begin{equation}
 \operatorname{Triangle}_{\mathrm{gap}}(Q^{(2)},\mathsf V)
 =-\frac{w(S_1-S_0)-w^2S_1+wS_w}{(1-w)^2}
  -\frac{z^{K+1}}{f_1}(S_1-S_0).
 \label{av:eq:closed-weighted-triangle}
\end{equation}
The two formal divisions by $1-w=1-z^K$ are linear: coefficients in each
residue class modulo $K$ are prefix-summed, and the operation is repeated for
the square.  The exceptional $j=1$ term uses the already available factor
$f_1$ and one root product.

Consequently~\eqref{av:eq:five-to-two} replaces five literal weighted
triangular traversals by two.  The remaining work is a constant number of
marked-root constructions, ordinary root products, and linear formal
divisions.  Only adjacent shifts are packed into one dense numerator; widely
separated shifts remain shifted polynomials until the closed formula is
applied.  Thus Lemma~\ref{av:lem:span} is unchanged.  The index-one atom
exposed by the transposition is lifted along one root-to-leaf path, whose
geometrically growing products cost $O(\M(N))$ in total.  This optimization
changes neither the $O(\M(N)\log N)$ time bound nor the $O(N\log N)$ space
bound.

The direct recursive evaluator and the optimized ring-level schedule now appear
as Algorithms~\ref{av:alg:staircase} and~\ref{av:alg:optimized-schedule} in the
main text.

\subsection{Executable local-polynomial and shared-root formulas}
\label{app:all-values-executable-formulas}

This subsection makes explicit the formulas needed to translate the optimized
schedule into code.  A stored shifted polynomial is a pair
$(s,p)$ representing $z^sp(z)$, where $p(0)$ may be nonzero.  If
$u=\min\{s,t\}$, its exact truncated arithmetic is
\begin{align}
 (s,p)+(t,q)
   &=\left(u,
      z^{s-u}p+z^{t-u}q\pmod {z^{N+1-u}}\right),\nonumber\\
 (s,p)(t,q)
   &=\left(s+t,
      pq\pmod {z^{N+1-s-t}}\right).
 \label{av:eq:shifted-polynomial-arithmetic}
\end{align}
Pairs with shift greater than $N$ are zero.  Trailing zero coefficients are
removed after every operation.  At a leaf $i$, the denominator is
$(1-z^i)^r$ and the marked numerator is
$\chi(i)z^{\alpha i}$, or zero when that atom is inactive.  Equations
\eqref{av:eq:markedmerge} and~\eqref{av:eq:crossmerge} then apply without any
implicit change of denominator.
The linear divisions used in~\eqref{av:eq:closed-weighted-triangle} are
equally explicit: if $Q=P/(1-z^d)$ and missing coefficients have value zero,
then
\begin{equation}
 q_j=p_j+\mathbf 1_{j\geqslant d}q_{j-d}
 \qquad(0\leqslant j\leqslant N).
 \label{av:eq:residue-prefix-division}
\end{equation}
Applying this recurrence twice divides by $(1-z^d)^2$; the case $d=1$
is ordinary coefficient prefix summation.

The full and empty rectangle tests are the endpoint conditions in
\eqref{av:eq:rectangle-endpoint-tests}.  Every remaining rectangle is partial
and is split as in Algorithm~\ref{av:alg:staircase}.

For rank-two fusion, let $(P_I^{(u)},T_J^{(u)})$, $u\in\{1,2\}$, be two
marked-family pairs on the same denominator trees, and fix scalars
$\lambda_1,\lambda_2$.  The fused numerator is defined on a full rectangle by
\begin{equation}
 \operatorname{Cross}_{\boldsymbol\lambda}(I,J)
 =\lambda_1P_I^{(1)}T_J^{(1)}
  +\lambda_2P_I^{(2)}T_J^{(2)}.
 \label{av:eq:rank-two-full-block}
\end{equation}
On a partial rectangle it obeys exactly the component-independent lifting
recurrence~\eqref{av:eq:crossmerge}.  Thus one may either carry the two
components separately or add their transformed products before the inverse
transform; both routes return the same numerator.

Finally, write
\[
 \Delta_1=\prod_{i=1}^{K}(1-z^i),
 \qquad
 \Delta_2=\prod_{j=1}^{2K}(1-z^j).
\]
The aliased first-role tree includes an inactive leaf at $i=1$.  Cancelling
one resulting factor $f_1=1-z$ represents the two assembled kernels as
\[
 H_0=\frac{N_H}{\Delta_1\Delta_2/f_1},
 \qquad
 C_0=\frac{N_C}{\Delta_1^2\Delta_2/f_1}.
\]
Consequently the exact shared-root lift used by the implementation is
\begin{equation}
 D=\frac{\Delta_1^2\Delta_2}{f_1},
 \qquad
 H_0=\frac{\Delta_1N_H}{D},
 \qquad
 C_0=\frac{N_C}{D}.
 \label{av:eq:shared-root-denominator}
\end{equation}
No general polynomial division is required for cancelling $f_1$: division by
$1-z$ is coefficient prefix summation.  To invert $D$, start with
$G_1=D(0)^{-1}=1$ and double the correct prefix by
\begin{equation}
 G_{2m}=G_m\bigl(2-DG_m\bigr)\pmod {z^{\min\{2m,N+1\}}}.
 \label{av:eq:newton-series-inverse}
\end{equation}
Then $[z^q]H_0=[z^q](\Delta_1N_HG)$ and
$[z^q]C_0=[z^q](N_CG)$ for $0\leqslant q\leqslant N$.
Frontier fusion only changes where equal-denominator numerators are added; it
does not change any formula above or the returned series.

\subsection{Primitive extraction and exact recovery}
\label{app:primitive-recovery}

This subsection supplies the detailed multiplicity and homogeneity argument
behind the two M\"obius transforms and the final placement polynomial.

\begin{proof}[Full proof of Proposition~\ref{av:prop:recovery}]
Equation~\eqref{av:eq:Hcone} counts every strict-side tuple with threshold
$m=ax+by$.  If $(a,b)=d(a_0,b_0)$, then the threshold is multiplied by $d$;
hence $h_m=\sum_{d\mid m}h^*_{m/d}$.  The weight
$(a-b)(x-y)$ also gains a factor $d$, so
\[
 c_m=\sum_{d\mid m}d\,c^*_{m/d}.
\]
Their Dirichlet inverses are $\mu$ and $d\mapsto\mu(d)d$, proving
\eqref{av:eq:mobius}.

For a primitive strict tuple let
\[
 m=ax+by,\qquad \delta=(a-b)(x-y).
\]
Its other bounding-box dimension is $ay+bx=m-\delta$, so its placement count
is
\[
 (n-m)(n-m+\delta)=(n-m)^2+\delta(n-m).
\]
Direction symmetry and exchange of the unequal side coordinates give
multiplicity four, hence $4h_m^*,4c_m^*$.  On the side diagonal the weight is two and
$\delta=0$, giving $2d_m^*$.  This proves the formula for $F_1(n)$ in
\eqref{av:eq:recover}; adding $F_0(n)$ by~\eqref{av:eq:split} gives $F(n)$.

The divisor--multiple loops have total length
$\sum_{d\leqslant N}\floor{N/d}=O(N\log N)$.  The five displayed prefixes
recover each $F(n)$ in constant time after one linear pass.
\end{proof}
\section{Exact arithmetic and implementation details}
\label{app:all-values-arithmetic}

This appendix explains how the ring-level construction is shared across calls,
how it is evaluated exactly in fixed-width modular arithmetic, and which
assumptions change beyond the supported endpoint range.

\subsection{Shared trees and fused traversals}
\label{av:sec:reductions}

The fixed expansion~\eqref{av:eq:fixed-expansion} contains terms with the
same denominators, regions, or weights.  The following reductions combine
them before the staircase recursion while preserving the recursion states and
the bound of Proposition~\ref{av:prop:staircase-complexity}.

Put
\[
 f_1=1-z,\qquad \Delta=\prod_{i=2}^{K}(1-z^i).
\]
For denominator exponent one, the left half of the second-role tree already
has denominator
$E_{[1,K]}=f_1\Delta$.  Represent the first role on the same interval tree
$[1,K]$, set its marked numerator at leaf $i=1$ to zero, and let a
lightweight structural copy refer to the existing denominator polynomials.
This represents the same rational sum over $i\geqslant2$, with one extra
common factor $f_1$.  At the root that factor is cancelled by formal division
by $1-z$, which is a prefix sum of coefficients.  The tree $E_{[1,K]}^2$
built for the squared second role is subsequently reused as the squared
first-role tree.  No dense polynomial is copied, and the extra tree structure
has only $O(K)$ nodes.

The implementation also fuses algebraically related products before their
inverse transforms.  A denominator lift has the form
$P_LD_R+P_RD_L$; both products are accumulated in the frequency domain and
share one inverse transform.  More generally, two aligned rank-one forms
$\lambda_1 U_1V_1+\lambda_2 U_2V_2$ over the same full rectangle share their inverse
transform, and a rank-two version of \textsc{Cross} carries the pair through
the same partial-state recursion.  The two weighted terms in
\eqref{av:eq:wedgeweight} or~\eqref{av:eq:triangleweight} are therefore carried
by one denominator-lifting traversal.

Several numerator families have the same region and root denominator but
widely separated shifts.  Packing them into one dense polynomial would
destroy the shifted-span bound.  The program instead descends a fixed number
of common boundary levels, evaluates each family separately below that
frontier, adds the local numerators there, and lifts the sum only once through
the shared upper levels.  This frontier fusion is used for both
regions.  The strict kernels $H_0$ and $C_0$ are finally lifted to one common
root denominator; one Newton inverse and two products with that inverse
produce both quotient series.

One frequently used marked family is obtained without a product-tree pass.
For
\[
 E_I(z)=\prod_{i\in I}(1-z^i)
\]
logarithmic differentiation gives
\begin{equation}
 \sum_{i\in I}\frac{iz^i}{1-z^i}
 =-\frac{zE_I'(z)}{E_I(z)}.
 \label{av:eq:log-derivative-marking}
\end{equation}
Hence its marked numerator is read from the already available denominator in
linear time at every node.  The exact two-traversal reduction is in
Appendix~\ref{app:all-values-optimized-schedule}, and its ring-level execution
is summarized by Algorithm~\ref{av:alg:optimized-schedule}.

\subsection{Exact modular implementation}
\label{av:sec:exact-arithmetic}

The mathematical algorithm is independent of finite-range engineering
choices.  The reference C++ implementation uses the truncated polynomial
products, Newton inversion, M\"obius sums, and five-prefix recovery proved in
the main text.  For $N\leqslant2^{30}$ it specializes the coefficient fields,
output accumulator, and polynomial indices to fixed widths.

The common-denominator coefficients can be much larger than the final answer,
so they are never materialized as unbounded integers.  The complete series
construction, M\"obius extraction, and linear recovery are run independently
modulo
\begin{equation}
 p_1=2305842979148922881,
 \qquad
 p_2=2305842811645198337.
 \label{av:eq:implementation-primes}
\end{equation}
In both fields $6$ is invertible, as required by
Theorem~\ref{av:thm:main}.  Both primes satisfy $2^{32}\mid p_i-1$, and the $2$-primary component
generated by $3$ has order $2^{32}$.  An ordinary truncated product needs
transform length at most $4N\leqslant2^{32}$, so all required roots of unity
exist.

For completeness, let $R=2^{64}$ and
$p'=-p^{-1}\pmod R$.  A field element $x$ is stored as $xR\bmod p$, and the
Montgomery reduction used after a $128$-bit product is
\begin{equation}
 \operatorname{Red}_p(t)
 =\frac{t+\bigl((tp')\bmod R\bigr)p}{R}\pmod p.
 \label{av:eq:montgomery-reduction}
\end{equation}
The numerator is divisible by $R$; one conditional subtraction returns the
canonical representative.  For a transform length $L\mid(p-1)$ put
$\omega_L=3^{(p-1)/L}\pmod p$.  With
\[
 \widehat a_j=\sum_{k=0}^{L-1}a_k\omega_L^{jk},
 \qquad
 a_k=L^{-1}\sum_{j=0}^{L-1}\widehat a_j\omega_L^{-jk},
\]
the truncated product used throughout the trees is
\begin{equation}
 (ab)_{<q}
 =\left(\operatorname{NTT}^{-1}_L
   \bigl(\operatorname{NTT}_L(a)\odot
         \operatorname{NTT}_L(b)\bigr)\right)_{<q},
 \quad
 L=2^{\lceil\log_2(\deg a+\deg b+1)\rceil}.
 \label{av:eq:ntt-convolution}
\end{equation}
Here $\odot$ denotes componentwise multiplication of the two transformed
vectors.
Operands are first shortened to the requested prefix, and the result is then
cut to $q$ coefficients.  A small direct-convolution cutoff and radix-two
versus radix-four butterflies affect running time but not this identity.

Each modular run returns the complete vector
$(F(1),\ldots,F(N))\bmod p_i$.  The program immediately incorporates that
vector into an exact array by incremental Chinese remaindering and releases
the modular workspace before starting the next field.  Thus the two large
denominator trees are never resident simultaneously.

Explicitly, suppose the current exact representative is $x$ modulo $M$ and
the next modular run returns $r$ modulo a coprime prime $p$.  The update is
\begin{equation}
 t=\bigl(r-(x\bmod p)\bigr)(M^{-1}\bmod p)\bmod p,
 \qquad
 x\gets x+Mt,
 \qquad
 M\gets Mp.
 \label{av:eq:incremental-crt}
\end{equation}
It is applied independently to every recovered value $F(n)$, beginning with
$M=1$ and $x=0$.

The exact value reported in~\cite{paper},
\begin{equation}
 F(2^{30})=
 4851850095158746095561485451592336296,
 \label{av:eq:exactmax}
\end{equation}
certifies the fixed reconstruction range.  The sequence is nondecreasing,
because an $n\times n$ grid embeds in the next grid.  Hence
\eqref{av:eq:exactmax} bounds every requested value with $n\leqslant2^{30}$.
Direct comparison gives $p_1p_2>F(2^{30})$, so the two residues determine
every answer uniquely and the program's two-limb unsigned accumulator is
sufficient.

The accompanying C++20 program contains a radix-four/two Cooley--Tukey
transform~\cite{cooley-tukey} over these two NTT fields
\cite{pollard1971,nussbaumer1982}, shifted dense polynomials, Newton series
inversion, the local-denominator recursion, divisor--multiple M\"obius
extraction, five-prefix recovery, and incremental CRT.  Modular multiplication
uses 64-bit Montgomery reduction~\cite{montgomery1985}.  The stated endpoint
is an arithmetic and indexing guarantee.  The in-memory staircase
representation still uses the $O(N\log N)$ space proved above, so a literal
endpoint run needs correspondingly large hardware or an out-of-core variant.

\subsection{Bit complexity and larger endpoints}
\label{av:sec:beyond}

The implementation bound $2^{30}$ comes from two fixed 61-bit moduli, a
two-word output integer, and 32-bit polynomial indices.
Extending the implementation requires the following changes.

\begin{table}[H]
\centering
\small
\begin{tabular}{p{0.25\textwidth}p{0.65\textwidth}}
\toprule
component & required change\\
\midrule
indices and degrees & replace \texttt{int} by \texttt{size\_t} or an
unsigned 64-bit type\\
output accumulator & replace the two-word \texttt{BigUInt} by a dynamic limb
array\\
exact reconstruction & choose the least number of fields whose characteristic
product exceeds a certified upper bound for $F(N)$\\
NTT roots & for $L=2^{\lceil\log_2(2N+1)\rceil}$ require
$L\mid(p_i-1)$ for every modulus\\
memory & if necessary, process tree levels out of core; the present
representation uses $O(N\log N)$ words\\
\bottomrule
\end{tabular}
\caption{Changes needed beyond $N=2^{30}$.}
\label{av:tab:beyond}
\end{table}

The asymptotic result of~\cite{paper} gives
\begin{equation}
 F(N)=\Theta(N^4\log N),\qquad
 \mathcal B_F(N):=\left\lceil\log_2(F(N)+1)\right\rceil
 =4\log_2N+\log_2\log N+O(1).
 \label{av:eq:outputbits}
\end{equation}
This determines the asymptotic size of the dynamic output accumulator.  For a
concrete endpoint, let $U_F(N)\geqslant F(N)$ be a certified numerical upper
bound and put
\[
 \mathcal B_U(N)=\left\lceil\log_2\bigl(U_F(N)+1\bigr)\right\rceil.
\]
The CRT range is determined by $U_F(N)$, not merely by the asymptotic bit
length in~\eqref{av:eq:outputbits}.

Let $\MZ(b)$ denote the bit complexity of multiplying two $b$-bit integers.
Table~\ref{av:tab:complexity-comparison} compares the previous and proposed
all-values methods.  The word-RAM has $\Theta(\log N)$-bit words.  The bit
bounds charge arithmetic on polynomially bounded exact values by
$O(\MZ(\log N))$ and use $\MZ(b)=O(b\log b)$
\cite{harvey-vdh}.  The proposed bit bound is conditional on the
transform-friendly field hypothesis stated below the table.

\begin{table}[H]
\centering
\small
\renewcommand{\arraystretch}{1.15}
\begin{tabular}{@{}lcc@{}}
\toprule
method & word-RAM operations & bit operations\\
\midrule
previous all-values method~\cite{paper}
  & $O(N^{3/2})$
  & $O(N^{3/2}\log N\log\log N)$\\
proposed all-values method
  & $O(N\log^2N)$
  & $O(N\log^3N\log\log N)$\\
\bottomrule
\end{tabular}
\caption{Time complexity in the machine-word and bit models.  The bit column
uses quasi-linear integer multiplication; the proposed bit bound is conditional
on the transform-friendly field hypothesis stated below.}
\label{av:tab:complexity-comparison}
\end{table}

For the proposed method, assume at every required transform length a
constant-size family of
coefficient fields with $\Theta(\log N)$-bit elements, the necessary roots of
unity, and characteristic product larger than a certified output bound.  A
field multiplication costs $O(\MZ(\log N))$ bit operations, so the exact
algorithm uses
\begin{equation}
 T_{\rm bit}(N)
 =O\!\left(N\log^2N\,\MZ(\log N)\right).
 \label{av:eq:bitcomplexity}
\end{equation}
In particular,
\[
 T_{\rm bit}(N)=
 \begin{cases}
  O(N\log^4N), & \text{schoolbook integer multiplication},\\
  O(N\log^3N\log\log N), & \text{quasi-linear integer multiplication}.
 \end{cases}
\]
The second line uses the $O(b\log b)$ integer-multiplication algorithm of
Harvey and van der Hoeven~\cite{harvey-vdh}; its classical fast-Fourier-
transform (FFT) predecessor is the Sch\"onhage--Strassen
algorithm~\cite{schoenhage-strassen}.  On a word random-access machine
(word-RAM)
with $\Theta(\log N)$-bit words, the same statement is
$O(N\log^2N)$ word operations and $O(N\log N)$ words of working memory.  The
staircase trees occupy $O(N\log^2N)$ bits, while the output table itself
contains $\Theta(N\log N)$ bits.

For a fixed family of $b$-bit prime fields, as long as the fields contain the
required roots, the exact number of runs is
\[
 r=\min\left\{s:\prod_{i=1}^{s}p_i>U_F(N)\right\}
  =\min\left\{s:\sum_{i=1}^{s}\log_2p_i>\log_2U_F(N)\right\}.
\]
For comparable primes this is approximately
$\lceil\mathcal B_U(N)/b\rceil$.  Repeating the
$O(N\log^2N)$ modular computation that many times yields
$O(N\log^3N)$ fixed-word operations as a range-dependent upper bound.  In the
asymptotic field model of~\eqref{av:eq:bitcomplexity}, the field elements have
$\Theta(\log N)$ bits and a constant-size family suffices.  Incremental CRT
keeps both the staircase storage and the growing exact output within
$O(N\log N)$ fixed-size words throughout this fixed-word regime.

The two primes in~\eqref{av:eq:implementation-primes} support transforms only
through length $2^{32}$.  Once
$L>2^{32}$, new primes with larger two-adic order, multiword prime fields, or
suitable extension fields are required in addition to the index and CRT
changes in Table~\ref{av:tab:beyond}.  None of these finite-width changes alters the algebraic
$O(\M(N)\log N)$ theorem.
\bibliography{references}

@misc{paper,
  author        = {Babichev, Dmitry and Babichev, Sergey},
  title         = {Counting All Lattice Rectangles in the Square Grid in Near-Linear Time},
  year          = {2026},
  eprint        = {2604.22456},
  archiveprefix = {arXiv},
  doi           = {10.48550/arXiv.2604.22456},
}

@misc{companion-one-value,
  author        = {Babichev, Dmitry and Shpakova, Tatiana},
  title         = {Counting Lattice Rectangles in {$O(n\log n)$} Operations},
  year          = {2026},
  eprint        = {2607.17961},
  archiveprefix = {arXiv},
  doi           = {10.48550/arXiv.2607.17961},
}

@misc{oeis,
  author = {{OEIS Foundation}},
  title  = {{A085582}: Number of Rectangles with Corners on an {$n \times n$} Grid},
  url    = {https://oeis.org/A085582},
}

@book{beck-robins,
  author    = {Beck, Matthias and Robins, Sinai},
  title     = {Computing the Continuous Discretely: Integer-Point Enumeration in Polyhedra},
  edition   = {2},
  series    = {Undergraduate Texts in Mathematics},
  publisher = {Springer},
  year      = {2015},
  doi       = {10.1007/978-1-4939-2969-6},
}

@article{barvinok,
  author  = {Barvinok, Alexander I.},
  title   = {A Polynomial Time Algorithm for Counting Integral Points in Polyhedra When the Dimension Is Fixed},
  journal = {Mathematics of Operations Research},
  volume  = {19},
  number  = {4},
  pages   = {769--779},
  year    = {1994},
  doi     = {10.1287/moor.19.4.769},
}

@incollection{barvinok-pommersheim,
  author    = {Barvinok, Alexander and Pommersheim, James E.},
  title     = {An Algorithmic Theory of Lattice Points in Polyhedra},
  booktitle = {New Perspectives in Algebraic Combinatorics},
  series    = {Mathematical Sciences Research Institute Publications},
  volume    = {38},
  pages     = {91--147},
  publisher = {Cambridge University Press},
  year      = {1999},
}

@article{barvinok-woods,
  author  = {Barvinok, Alexander and Woods, Kevin},
  title   = {Short Rational Generating Functions for Lattice Point Problems},
  journal = {Journal of the American Mathematical Society},
  volume  = {16},
  number  = {4},
  pages   = {957--979},
  year    = {2003},
  doi     = {10.1090/S0894-0347-03-00428-4},
}

@article{verdoolaege-woods,
  author  = {Verdoolaege, Sven and Woods, Kevin},
  title   = {Counting with Rational Generating Functions},
  journal = {Journal of Symbolic Computation},
  volume  = {43},
  number  = {2},
  pages   = {75--91},
  year    = {2008},
  doi     = {10.1016/j.jsc.2007.07.007},
}

@article{beck-robins-rational-polygons,
  author  = {Beck, Matthias and Robins, Sinai},
  title   = {Explicit and Efficient Formulas for the Lattice Point Count in Rational Polygons Using {Dedekind--Rademacher} Sums},
  journal = {Discrete \& Computational Geometry},
  volume  = {27},
  number  = {4},
  pages   = {443--459},
  year    = {2002},
  doi     = {10.1007/s00454-001-0082-3},
  eprint  = {math/0111329},
  archiveprefix = {arXiv},
}

@article{beck-haase-matthews,
  author  = {Beck, Matthias and Haase, Christian and Matthews, Asia R.},
  title   = {{Dedekind--Carlitz} Polynomials as Lattice-Point Enumerators in Rational Polyhedra},
  journal = {Mathematische Annalen},
  volume  = {341},
  number  = {4},
  pages   = {945--961},
  year    = {2008},
  doi     = {10.1007/s00208-008-0220-9},
}

@article{breuer-heymann,
  author  = {Breuer, Felix and von Heymann, Frederik},
  title   = {Staircases in {$\mathbb Z^2$}},
  journal = {Integers},
  volume  = {10},
  number  = {6},
  pages   = {807--847},
  year    = {2010},
  doi     = {10.1515/INTEG.2010.058},
}

@article{huxley-nowak,
  author  = {Huxley, Martin N. and Nowak, Werner G.},
  title   = {Primitive Lattice Points in Convex Planar Domains},
  journal = {Acta Arithmetica},
  volume  = {76},
  number  = {3},
  pages   = {271--283},
  year    = {1996},
  doi     = {10.4064/aa-76-3-271-283},
}

@article{pawlewicz-patrascu,
  author  = {Pawlewicz, Jakub and P{\u a}tra{\c s}cu, Mihai},
  title   = {Order Statistics in the {Farey} Sequences in Sublinear Time and Counting Primitive Lattice Points in Polygons},
  journal = {Algorithmica},
  volume  = {55},
  number  = {2},
  pages   = {271--282},
  year    = {2009},
  doi     = {10.1007/s00453-008-9221-z},
}

@book{apostol,
  author    = {Apostol, Tom M.},
  title     = {Introduction to Analytic Number Theory},
  series    = {Undergraduate Texts in Mathematics},
  publisher = {Springer},
  year      = {1976},
  doi       = {10.1007/978-1-4757-5579-4},
}

@book{concrete-math,
  author    = {Graham, Ronald L. and Knuth, Donald E. and Patashnik, Oren},
  title     = {Concrete Mathematics},
  edition   = {2},
  publisher = {Addison--Wesley},
  year      = {1994},
}

@article{brent-kung,
  author  = {Brent, Richard P. and Kung, H. T.},
  title   = {Fast Algorithms for Manipulating Formal Power Series},
  journal = {Journal of the ACM},
  volume  = {25},
  number  = {4},
  pages   = {581--595},
  year    = {1978},
  doi     = {10.1145/322092.322099},
}

@article{borodin-moenck,
  author  = {Borodin, Allan and Moenck, Robert},
  title   = {Fast Modular Transforms},
  journal = {Journal of Computer and System Sciences},
  volume  = {8},
  number  = {3},
  pages   = {366--386},
  year    = {1974},
  doi     = {10.1016/S0022-0000(74)80029-2},
}

@book{modern-computer-algebra,
  author    = {von zur Gathen, Joachim and Gerhard, J{\"u}rgen},
  title     = {Modern Computer Algebra},
  edition   = {3},
  publisher = {Cambridge University Press},
  year      = {2013},
  doi       = {10.1017/CBO9781139856065},
}

@article{cooley-tukey,
  author  = {Cooley, James W. and Tukey, John W.},
  title   = {An Algorithm for the Machine Calculation of Complex {Fourier} Series},
  journal = {Mathematics of Computation},
  volume  = {19},
  pages   = {297--301},
  year    = {1965},
  doi     = {10.1090/S0025-5718-1965-0178586-1},
}

@article{pollard1971,
  author  = {Pollard, John M.},
  title   = {The Fast {Fourier} Transform in a Finite Field},
  journal = {Mathematics of Computation},
  volume  = {25},
  number  = {114},
  pages   = {365--374},
  year    = {1971},
  doi     = {10.1090/S0025-5718-1971-0301966-0},
}

@book{nussbaumer1982,
  author    = {Nussbaumer, Henri J.},
  title     = {Fast Fourier Transform and Convolution Algorithms},
  edition   = {2},
  series    = {Springer Series in Information Sciences},
  volume    = {2},
  publisher = {Springer},
  year      = {1982},
  doi       = {10.1007/978-3-642-81897-4},
}

@article{montgomery1985,
  author  = {Montgomery, Peter L.},
  title   = {Modular Multiplication without Trial Division},
  journal = {Mathematics of Computation},
  volume  = {44},
  number  = {170},
  pages   = {519--521},
  year    = {1985},
  doi     = {10.1090/S0025-5718-1985-0777282-X},
}

@article{schoenhage-strassen,
  author  = {Sch{\"o}nhage, Arnold and Strassen, Volker},
  title   = {Schnelle Multiplikation gro{\ss}er Zahlen},
  journal = {Computing},
  volume  = {7},
  number  = {3--4},
  pages   = {281--292},
  year    = {1971},
  doi     = {10.1007/BF02242355},
}

@article{harvey-vdh,
  author  = {Harvey, David and van der Hoeven, Joris},
  title   = {Integer Multiplication in Time {$O(n \log n)$}},
  journal = {Annals of Mathematics},
  volume  = {193},
  number  = {2},
  pages   = {563--617},
  year    = {2021},
  doi     = {10.4007/annals.2021.193.2.4},
}

\end{document}